\documentclass[a4paper, 10pt, twocolumn]{article}

\usepackage{authblk}
\usepackage[a4paper,top=2cm,bottom=2cm,left=2cm,right=2cm,marginparwidth=1.75cm]{geometry}
\usepackage{parskip}
\usepackage[colorlinks=true, allcolors=blue]{hyperref}

\usepackage{natbib}
\usepackage[english]{babel}
\usepackage[utf8]{inputenc}
\usepackage[T1]{fontenc}
\usepackage{lmodern}

\usepackage[title]{appendix}
\usepackage{xcolor}
\usepackage{textcomp}
\usepackage{manyfoot}
\usepackage{enumitem}
\usepackage{listings}
\usepackage{placeins}

\usepackage{amsmath, amssymb, amsfonts, amsthm, mathtools, nicefrac, amscd, latexsym, mathrsfs, dsfont}
\mathtoolsset{showonlyrefs}
\usepackage{MnSymbol}

\usepackage{graphicx}
\usepackage[font=small, labelfont=bf]{caption}
\usepackage{subcaption}

\usepackage{tabularx, fix-cm, array}
\usepackage{multirow, multicol}
\usepackage{booktabs}

\usepackage{algorithm}%
\usepackage{algpseudocode}%
\usepackage{algcompatible}

\usepackage[colorlinks=true, allcolors=blue]{hyperref}

\theoremstyle{plain}
\newtheorem{definition}{Definition}
\newtheorem{theorem}{Theorem}
\newtheorem{proposition}[theorem]{Proposition}
\newtheorem{lemma}{Lemma}

\theoremstyle{remark}

\newtheorem{condition}{Condition}[section]

\providecommand{\keywords}[1]{\small\textbf{{Keywords.}} #1}

\newcommand{\supnorm}[1]{\mathopen{}\mathclose\bgroup\left\lVert #1 \aftergroup\egroup\right\rVert_\infty}
\newcommand{\tvnorm}[1]{\mathopen{}\mathclose\bgroup\left\lVert #1 \aftergroup\egroup\right\rVert_{\mathsf{TV}}}
\newcommand{\kldivergence}[2]{\mathsf{KL}\mathopen{}\mathclose\bgroup\left(#1 \parallel #2 \aftergroup\egroup\right)}

\newcommand{\gauss}{\mathcal{N}}
\newcommand{\target}{\pi}
\newcommand{\distr}{\mu}
\newcommand{\projit}{\tilde\mu}
\newcommand{\transitionker}{K_\target}
\newcommand{\ulaker}{\mathsf{R}_\gamma}

\newcommand{\ulaiterate}[1]{\mu_{#1}^{\mathsf{R}}}

\newcommand{\approxit}[1]{\projit_{#1}}

\newcommand{\Fmap}{\mathcal{F}}
\newcommand{\emFmap}{\mathcal{F}_{\mathsf{em}}}
\newcommand{\eFmap}{\mathcal{F}_{\mathsf{e}}}
\newcommand{\Fmc}[1]{\widehat\Fmap^N _{\projit_{#1}}}

\newcommand{\probmeasures}{\mathbb{M}_1}
\newcommand{\probmeasuresac}{\mathbb{M}_{\target}}
\newcommand{\paramfamily}{\mathsf{F}_{\theta}}
\newcommand{\R}{\mathbb{R}}
\newcommand{\N}{\mathbb{N}}

\newcommand{\EMC}{\text{EM2C}}
\newcommand{\param}{\theta}
\newcommand{\pow}{\varepsilon}
\newcommand{\lsconst}{C_{\mathrm{LS}}}

\newcommand{\eqsp}{\;}
\renewcommand{\mid}{\;\ifnum\currentgrouptype=16 \middle\fi|\;}
\newcommand{\rmd}{\mathrm{d}}
\DeclareMathOperator*{\argmin}{argmin}

\begin{document}

\title{Entropic Mirror Monte Carlo}
\date{}
\author[1,2]{Anas Cherradi}
\author[3]{Yazid Janati El Idrissi}
\author[4]{Alain Durmus}
\author[1]{Sylvain Le Corff}
\author[5]{Yohan Petetin}
\author[6, 7]{Julien Stoehr}

\affil[1]{\small Sorbonne Université, Université Paris Cité, CNRS, Laboratoire de Probabilités, Statistique et Modélisation, LPSM, 75005 Paris, France}
\affil[2]{\small EMINES, University Mohammed VI Polytechnic, Benguerir, Maroc}
\affil[3]{\small Institute of Foundation Models, MBZUAI, 75002 Paris, France}
\affil[4]{\small Centre Mathématiques Appliquées, Institut Polytechnique de Paris, 91120 Palaiseau, France}
\affil[5]{\small Telecom SudParis, CNRS, Institut Polytechnique de Paris, 91011 Evry, France}
\affil[6]{\small CEREMADE, Universit\'e Paris-Dauphine, Universit\'e PSL, CNRS, 75016 Paris, France}
\affil[7]{\small Universit\'e Paris-Saclay, INRAE, AgroParisTech, UMR MIA Paris-Saclay, 91120 Palaiseau, France}

\maketitle

\begin{abstract}
  Importance sampling is a Monte Carlo method which designs estimators of expectations under a target distribution using weighted samples from a proposal distribution. 
When the target distribution is complex, such as multimodal distributions in high-dimensional spaces, the efficiency of importance sampling critically depends on the choice of the proposal distribution.
In this paper, we propose a novel adaptive scheme for the construction of efficient proposal distributions. Our algorithm promotes efficient exploration of the target distribution by combining global sampling mechanisms with a delayed weighting procedure. 
The proposed weighting mechanism plays a key role by enabling rapid resampling in regions where the proposal distribution is poorly adapted to the target. Our sampling algorithm is shown to be geometrically convergent under mild assumptions and is illustrated through various numerical experiments.

\vspace{\baselineskip}
\noindent\keywords{adaptive importance sampling, mirror descent}
\end{abstract}

\section{Introduction}

Let $\target$ be a probability distribution on $\mathcal{X} = \mathbb{R}^d$, known up to a multiplicative constant.
Our objective is to estimate expectations of the form $\target(f) = \int f(x)\target(\rmd x)$ for measurable test functions $f$, in settings where direct sampling from $\target$ is infeasible.
Such situations naturally arise in Bayesian inference, rare event simulation, and high-dimensional statistical modeling.

A standard methodology for addressing this problem is importance sampling.
Given a proposal distribution $\distr$ that dominates $\target$, expectations under $\target$ can be approximated by reweighting samples drawn from $\distr$ \citep{akyildiz2021convergence,agapiou2017importance,thin2021neo}.
The statistical efficiency of importance sampling critically depends on the mismatch between $\distr$ and $\target$.
In particular, it is well known that the mean squared error of self-normalized importance sampling estimators can grow exponentially with divergence measures such as the Kullback--Leibler or Rényi divergences between $\target$ and $\distr$ \citep{agapiou2017importance,chatterjee2018sample}.
This phenomenon severely limits the applicability of importance sampling in high-dimensional settings or when $\target$ exhibits strong multimodality. In \citet{grenioux2025improving}, the authors highlight in particular that addressing multi-modality is a major challenge of sampling algorithms and discuss  systematic evaluation of various samplers.

Adaptive importance sampling methods aim to mitigate this issue by iteratively updating the proposal distribution using weighted samples from previous iterations \citep{oh1993integration,cappe2004population,cornuet2012adaptive}.
These approaches typically rely on parametric or non-parametric families of distributions and on variational criteria such as the minimization of the Kullback--Leibler divergence.
Despite significant progress, designing adaptive proposals that simultaneously scale with the dimension and efficiently explore multimodal target distributions remains challenging.

Markov chain Monte Carlo methods provide an alternative strategy by constructing a Markov chain that admits $\target$ as invariant distribution.
These methods avoid the need to evaluate the normalizing constant of $\target$ and enjoy strong asymptotic guarantees \citep{roberts2004general}.
However, in practice, their performance is often hindered by poor mixing properties, especially in multimodal settings where transitions between distant regions of high probability mass are unlikely.
Moreover, while Markov chain Monte Carlo algorithms generate asymptotically exact samples, they do not directly yield tractable proposal distributions that can be evaluated pointwise, which prevents their direct use within importance sampling frameworks.

\begin{figure*}[!ht]
\includegraphics[width=\textwidth]{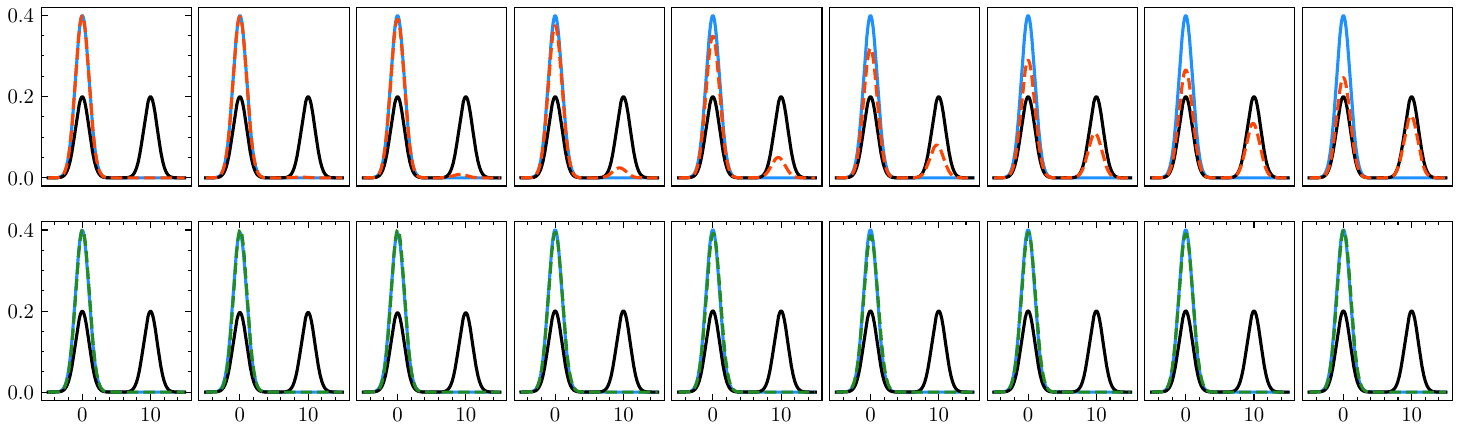}
\caption{Intermediate sequence of proposal distributions build with the Entropic Mirror Descent for sampling from the bimodal target distribution $\target = 0.5 \cdot \gauss(0,1) + 0.5 \cdot \gauss(10,1)$ (black solid line) and starting with distribution $\mu_0$ (blue solid line). Top row: theoretical sequence $\{\mu_t\}_{t \geq 0}$ (red dashed line). Bottom row: updates (green dashed line) when using a bimodal variational distributions build upon $5,000$ samples from $\{\mu_t\}_{t \geq 0}$.}
\label{fig:emd:bimodal}
\end{figure*}

A promising direction to overcome these limitations consists in constructing sequences of probability measures $\{\mu_t\}_{t \in \mathbb{N}}$ that progressively approach the target distribution while enjoying explicit contraction properties.
Recent works have shown that Entropic Mirror Descent maps provide a principled way to define such sequences, ensuring a geometric decrease of $\kldivergence{\target}{\mu_t}$ under suitable conditions \citep{beck2003mirror,dai2016provable,korba2022adaptive,bianchi2024}.
However, the resulting updates are often intractable and difficult to approximate in multimodal settings, as naive Monte Carlo approximations may fail to discover regions of high probability mass (see Figure \ref{fig:emd:bimodal}).

In this paper, we propose \emph{Entropic Mirror Monte Carlo} (\EMC)
a new adaptive importance sampling methodology
that augments Entropic Mirror Descent with Markovian dynamics to promote exploration while preserving contraction properties.
The resulting mapping combines weighted samples drawn from the current proposal distribution with samples propagated through a Markov transition kernel.
This construction yields a sequence of idealized updates that converges geometrically fast to the target distribution under mild assumptions.
We further derive a practical stochastic implementation based on sampling, weighting and resampling steps, and demonstrate through numerical experiments that the proposed method efficiently handles challenging multimodal targets, including in high-dimensional settings.

The main contributions of this work can be summarized as follows.
\begin{itemize}
\item We propose a sequence of idealized updates for the proposal distributions that converges geometrically fast to the target distribution under mild assumptions. Each update is formulated as a mixture of two components: the first preserves the contraction properties of Entropic Mirror Descent and exploits regions already well represented by the current proposal, while the second leverages the exploratory capabilities of a Markov kernel.
\item We show that, although the Unadjusted Langevin kernel induces a bias with respect to its invariant distribution, this bias remains controlled and does not hinder convergence, while the contraction effect of the Entropic Mirror Descent component is preserved.
\item We introduce a novel importance sampling scheme to approximate the idealized algorithm and demonstrate its effectiveness through numerical experiments in various settings.
\end{itemize}
The paper is organized as follows. In Section~\ref{sec:emc}, we present how to construct a sequence of probability measures that converges to the target distribution and enjoys explicit contraction guarantee and detail Entropic Mirror Descent. In Section~\ref{sec:em2C:algo},  we introduce our stochastic implementation of Entropic Mirror Monte Carlo in Algorithm~\ref{alg:approxiterates}. The empirical performance of the algorithm is provided in Section~\ref{sec:exp}.

\section{Entropic Mirror Descent}
\label{sec:emc}

\paragraph{Notations and conventions. }

Throughout the paper, the state space is $\mathcal{X} = \mathbb{R}^d$ and the reference dominating measure is the Lebesgue measure $\mathrm{Leb}$.
We denote by $\probmeasures(\mathbb{R}^d)$ the set of probability measures on $\mathbb{R}^d$ and by
\begin{equation*}
\probmeasuresac = \left\{ \distr \in \probmeasures(\mathbb{R}^d) : \target \gg \distr \ \text{and} \ \distr \gg \target \right\}
\end{equation*}
the set of probability measures that are equivalent to $\target$.

We do not distinguish probability measures from their densities with respect to $\mathrm{Leb}$.
Accordingly, when $\mu \ll \mathrm{Leb}$, we write $\mu(\rmd x) = \mu(x)\,\mathrm{Leb}(\rmd x)$.
For $\nu \in \probmeasuresac$, the Radon--Nikodym derivative of $\target$ with respect to $\nu$ is given, for $\nu$-almost every $x$,  by
\begin{equation*}
\frac{\rmd \target}{\rmd \nu}(x) = \frac{\target(x)}{\nu(x)}\eqsp.
\end{equation*}
For any probability measure $\nu$ and measurable function $f$, we use the shorthand
\begin{equation*}
\nu(f) = \int_{\mathcal{X}} f(x)\nu(\rmd x)\eqsp, 
\qquad
\supnorm{f} = \sup_{x \in \mathbb{R}^d} |f(x)|\eqsp.
\end{equation*}
Given a Markov transition kernel $K$ on $(\mathbb{R}^d,\mathcal{B}(\mathbb{R}^d))$ and a probability measure $\mu$, we define
\begin{align*}
\mu K(\rmd y) &= \int \mu(\rmd x)K(x,\rmd y)\eqsp.
\end{align*}
A Markov kernel $K$ is said to be $\target$-invariant if $\target K = \target$.

The Kullback--Leibler between $\target$ and $\mu\in\probmeasuresac$ is defined as
\begin{equation*}
\kldivergence{\target}{\distr}
=
\int \log \frac{\rmd \target}{\rmd \distr}(x)\,\target(\rmd x)\eqsp.
\end{equation*}
Finally, the total variation distance between two probability measures $\mu$ and $\nu$ is defined as
\begin{equation*}
\tvnorm{\mu - \nu}
=
\sup_{A \in \mathcal{B}(\mathbb{R}^d)} |\mu(A) - \nu(A)|\eqsp.
\end{equation*}

\subsection{Contracting Sequences of Probability Measures}
\label{sec:general_framework}

Our objective is to construct a sequence of probability measures
$\{\mu_t\}_{t \in \mathbb{N}}$ in $\probmeasuresac$
that converges to the target distribution $\target$ and enjoys explicit contraction guarantees.
More precisely, we seek sequences satisfying, for some constant $0 \leq \rho \leq 1$ and all $t \in \mathbb{N}$,
\begin{equation}
\label{eq:map:defn}
\kldivergence{\target}{\mu_{t+1}}
\leq
\rho\,\kldivergence{\target}{\mu_t}\eqsp.
\end{equation}
This divergence is natural for importance sampling as it 
control bias and variance of importance weights, and it also is the one directly contracted by Entropic Mirror. 
When $\rho < 1$, this property ensures that $\kldivergence{\target}{\mu_t}$ converges geometrically fast to zero, which is crucial for rapidly reducing target-proposal mismatch and escape severe weight degeneracy regimes.

We focus on sequences generated as iterates of a mapping
$\Fmap : \probmeasuresac \to \probmeasuresac$,
\begin{equation*}
\mu_{t+1} = \Fmap(\mu_t)\eqsp,
\end{equation*}
and refer to such a mapping as \emph{contracting} when \eqref{eq:map:defn} holds.
Recent works have shown that mirror descent–type mappings provide a principled way to construct such sequences
\citep{beck2003mirror,dai2016provable,daudel2021infinite,korba2022adaptive}.


\subsection{Entropic Mirror Descent Mapping}
\label{sec:emd}

For all $\mu \in \probmeasuresac$ and $0 < \pow \leq 1$, the Entropic Mirror Descent mapping $\eFmap$ is defined as
\begin{equation}
\label{eq:map:emd}
\eFmap(\mu)
\propto
\left( \frac{\rmd \target}{\rmd \mu} \right)^{\pow}\mu\eqsp.
\end{equation}
When it exists, the probability density of the associated measure is $\eFmap(\mu)(x) \propto \target(x)^{\pow}\mu(x)^{1-\pow}$.
The mapping $\eFmap$ corresponds to a particular instance of mirror descent applied to the minimization of the Kullback--Leibler divergence \citep{beck2003mirror}.
It was shown in \citet{korba2022adaptive} that $\eFmap$ satisfies the contraction property \eqref{eq:map:defn} with $\rho = 1 - \pow$.
The parameter $\pow$ thus controls a trade-off between the speed of convergence and the discrepancy between successive iterates.

From Jensen’s inequality, it also holds that
\begin{equation*}
\kldivergence{\mu}{\eFmap(\mu)}
\leq
\pow\,\kldivergence{\mu}{\target}\eqsp,
\end{equation*}
which shows that consecutive iterates remain close when $\pow$ is small.
This observation is central in practice, as it impacts the quality of Monte Carlo approximations of the mapping.


\subsection{Limitations of Entropic Mirror Descent}
\label{sec:emd:limitations}

Although the Entropic Mirror Descent mapping enjoys strong theoretical guarantees, it is generally intractable for sampling and density evaluation.
Practical implementations therefore rely on Monte Carlo approximations of $\eFmap(\mu)$, typically based on weighted samples drawn from $\mu$.
As in importance sampling, the quality of these approximations heavily depends on the overlap between $\mu$ and $\target$.

In multimodal settings, when the current iterate $\mu_t$ fails to cover certain regions of high probability mass under $\target$,
Monte Carlo approximations of $\eFmap(\mu_t)$ may entirely miss these regions. As a consequence, the resulting approximate sequence may fail to converge to $\target$. 
This phenomenon persists even when the variational family is sufficiently rich, as illustrated in Figure~\ref{fig:emd:bimodal}.  The target distribution is set to a 
univariate Gaussian mixture with two components
and the initial proposal $\mu_0$ to a mixture with its two modes 
at $0$. In this setting, the sequence $\{\mu_t\}_{t \geq 0}$ can be exactly computed and we indeed observe that it converges to $\pi$. However, when replacing the mapping $\eFmap$ by its Monte Carlo counterpart, the update reduces to sampling from the current proposal distribution and the method fails to visit the second mode.

These limitations motivate the introduction of additional mechanisms that explicitly promote exploration of the state space.
In particular, it is natural to consider Markov transition kernels that are able to move samples into regions that are unlikely under the current proposal but relevant under the target distribution. This idea follows recent works such as \citet{samsonov2022local} which design non-local moves and global proposals to improve sampling performance of MCMC algorithms or \citet{cabezas2024markovian} where the authors proposed an adaptive MCMC scheme combining a non-local, flow-informed transition kernel
 and a local transition kernel, which generate
samples from a sequence of annealed target distributions.


\section{Entropic Mirror Monte Carlo}
\label{sec:em2C:algo}

\subsection{Mapping with Invariant Markov Kernels}
\label{sec:MEMD-inv}

To address the aforementioned limitations, we introduce a new mapping that combines importance reweighting with Markovian dynamics.
Let $\transitionker$ be a $\target$-invariant Markov transition kernel.
For $0 < \pow \leq 1$ and $0 < \lambda \leq 1$, we define
\begin{equation}
\label{eq:map:memd}
\emFmap(\mu;\,\lambda, \transitionker, \pow)
=
\lambda\,\eFmap(\mu)
+
(1-\lambda)\,\Fmap_{\transitionker}(\mu)\eqsp,
\end{equation}
where
\begin{equation*}
\Fmap_{\transitionker}(\mu)
\propto
\left( \frac{\rmd \target}{\rmd \mu} \right)^{\pow}
\mu\transitionker\eqsp.
\end{equation*}
The proposed mapping is a convex combination of probability measures, or of their densities when they exist.
The first term preserves the contraction properties of Entropic Mirror Descent and exploits regions already identified by the current proposal.
The second term leverages the exploratory power of the Markov kernel and assigns significant weight to samples that reach previously unexplored regions of $\target$.
The parameter $\lambda$ balances these effects: large values favor the mirror step but may weaken exploration, while small values favor the kernel step but may deteriorate contraction. The calibration of this parameter is therefore crucial to obtaining an efficient estimator

\begin{lemma}
\label{lem:definiteness}
Given $0 < \pow \leq 1$ and $0 < \lambda \leq 1$, if $\mu \in \probmeasuresac$ satisfies
\begin{equation*}
\supnorm{\frac{\rmd \target}{\rmd \mu}} < \infty\eqsp,
\end{equation*}
then $\emFmap(\mu;\, \lambda, \transitionker, \pow)$ is well defined and is a probability measure in $\probmeasuresac$.
Moreover, it satisfies
\begin{equation*}
\supnorm{\frac{\rmd \target}{\rmd \emFmap(\mu;\, \lambda, \transitionker, \pow)}} < \infty\eqsp.
\end{equation*}
\end{lemma}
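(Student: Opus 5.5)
Write $w = \rmd\target/\rmd\mu$ and $M = \supnorm{w} < \infty$. The plan is to show that each of the two components of \eqref{eq:map:memd} is a well-defined probability measure. Then we check equivalence with $\target$ and bound the density ratio of the mixture.

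\textbf{Step 1: the mirror component.} Its normalizing constant is $Z_1 = \mu(w^{\pow})$. Since $0 \le w^{\pow} \le M^{\pow}$, we have $Z_1 \le M^{\pow} < \infty$. For the lower bound, Jensen's inequality applied to the concave map $u \mapsto u^{\pow}$ would give the wrong direction. Instead, use $w \le M$ to write
\begin{equation*}
w^{\pow} = w\, w^{\pow-1} \ge w\, M^{\pow-1},
\end{equation*}
so that $Z_1 \ge M^{\pow-1}\mu(w) = M^{\pow-1} > 0$. Hence $\eFmap(\mu) = w^{\pow}\mu / Z_1$ is a probability measure.

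\textbf{Step 2: the kernel component.} Its normalizing constant is $Z_2 = (\mu\transitionker)(w^{\pow})$, which is again at most $M^{\pow}$. For the lower bound, use the same pointwise inequality $w^{\pow} \ge M^{\pow-1} w$, which gives $Z_2 \ge M^{\pow-1}(\mu\transitionker)(w)$. This last quantity needs a positive lower bound, and this is the step I expect to be the main obstacle, since $w$ is only defined $\mu$-a.e.

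The key remark is that $\mu \ge \target/M$ as measures, because $\target = w\mu \le M\mu$. Consequently $\mu\transitionker \ge \target\transitionker/M = \target/M$, using the $\target$-invariance of $\transitionker$. This gives $(\mu\transitionker)(w) \ge \target(w)/M > 0$, since $w > 0$ $\target$-a.e. by equivalence. The same domination also shows that $\target \ll \mu\transitionker$, so $w$, defined $\target$-a.e. and hence $\mu$-a.e., is also meaningful where needed.

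\textbf{Step 3: equivalence with $\target$ and the density-ratio bound.} For $\target \ll \emFmap(\mu)$ it suffices to treat the mirror part, since $\lambda > 0$. Its density with respect to $\target$ is
\begin{equation*}
\frac{w^{\pow}}{Z_1 w} = \frac{w^{\pow-1}}{Z_1} \ge \frac{M^{\pow-1}}{Z_1} > 0,
\end{equation*}
which gives equivalence of $\eFmap(\mu)$ and $\target$. It also yields
\begin{equation*}
\frac{\rmd\target}{\rmd\emFmap(\mu)} \le \frac{1}{\lambda}\,\frac{\rmd\target}{\rmd\eFmap(\mu)} = \frac{Z_1\, w^{1-\pow}}{\lambda} \le \frac{M^{\pow}M^{1-\pow}}{\lambda} = \frac{M}{\lambda}.
\end{equation*}
This bound uses that the kernel term is nonnegative, so the mixture dominates $\lambda\,\eFmap(\mu)$ and the Radon–Nikodym derivatives compare accordingly.

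For $\emFmap(\mu) \ll \target$, the mirror part is $\ll \mu \sim \target$. The kernel part is supported where $w > 0$: wherever $w = 0$, the factor $w^{\pow}$ kills the mass, and $\{w > 0\}$ coincides with the support of $\target$ up to null sets. To make this rigorous, I would fix a version of $w$ defined everywhere, vanishing off the support of $\target$, and argue with that version.

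Combining the three steps, $\emFmap(\mu)$ is a probability measure in $\probmeasuresac$ with $\supnorm{\rmd\target/\rmd\emFmap(\mu)} \le M/\lambda < \infty$.
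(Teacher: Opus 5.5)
Your core argument matches the paper's. Finiteness of the normalizing constants comes from $w\le M$, and then $\emFmap(\mu)\ge\lambda\,\eFmap(\mu)$ gives $\rmd\pi/\rmd\emFmap(\mu)\le\lambda^{-1}\mu(w^{\pow})\,w^{1-\pow}$, which is exactly the paper's display. Your positivity bounds on $Z_1$ and $Z_2$, and your proof of $\pi\ll\emFmap(\mu)$, are correct additions the paper skips.

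The gap is in the other half of the equivalence and in the well-definedness of the kernel component. Showing that $w^{\pow}\,\mu\transitionker$ puts no mass on $\{w=0\}$ does not give $\Fmap_{\transitionker}(\mu)\ll\pi$, because it says nothing about $\pi$-null sets inside $\{w>0\}$. For instance, if $\mu\transitionker$ had an atom at a point where $w>0$, so would $\Fmap_{\transitionker}(\mu)$, with a mass depending on the arbitrary value of $w$ there. Fixing an everywhere-defined version of $w$ does not change this. Likewise, in Step 2 you use $\pi\ll\mu\transitionker$ to argue that $w$ is ``meaningful where needed'', but that is the wrong direction. For $\int w^{\pow}\,\rmd(\mu\transitionker)$ and the measure $w^{\pow}\mu\transitionker$ to be independent of the version of $w$, which is only determined $\pi$-a.e., you need $\mu\transitionker\ll\pi$.

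This fact follows in one line from invariance. If $\pi(A)=0$, then $0=\pi\transitionker(A)=\int \transitionker(x,A)\,\pi(\rmd x)$, so $\transitionker(x,A)=0$ for $\pi$-a.e. $x$. Since $\mu\ll\pi$, this holds for $\mu$-a.e. $x$, and therefore $\mu\transitionker(A)=0$. So $\mu\transitionker\ll\pi\sim\mu$, and together with your domination $\mu\transitionker\ge\pi/M$ you in fact get $\mu\transitionker\sim\pi$. Three consequences follow:
\begin{itemize}
\item the integrals against $\mu\transitionker$ no longer depend on the version of $w$;
\item the bound $w\le M$ holds $\mu\transitionker$-a.e.;
\item $\Fmap_{\transitionker}(\mu)\ll\mu\transitionker\ll\pi$ is immediate.
\end{itemize}
With this inserted, your proof is complete and more careful than the paper's.
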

Lemma~\ref{lem:definiteness}
 ensures controlled initial mismatch and well-defined reweighting. Namely, if we have an initial distribution $\mu_{0}$ 
which satisfies $\supnorm{\rmd \target / \rmd \mu_{0}} <  \infty$, the mapping \eqref{eq:map:memd} can define a sequence of probability measures $\{\mu_t\}_{t\in\N}$ given, for $0 < \lambda_{t - 1} \leq 1;\ 0< \pow \leq 1$, by
\begin{equation}
\label{eq:memd:iterates}
\mu_{t} = \emFmap(\mu_{t-1};\ \lambda_{t - 1}, \transitionker, \pow)\eqsp,
\end{equation}
since all the iterates $\mu_{t}$ then also satisfy $\supnorm{\rmd \target / \rmd \mu_{t}} < \infty$. Note that such a condition does not prevent finite-sample weight degeneracy.

\begin{proposition}
\label{prop:klcontraction}
Let $0 < \pow \leq 1$ and $\mu_0 \in \probmeasuresac$ such that $\supnorm{\rmd \target / \rmd \mu_0} < \infty$.
If $\transitionker$ is $\target$-invariant, then there exists a sequence $0 < \lambda_t \leq 1$ such that the iterates
\begin{equation*}
\mu_t = \emFmap(\mu_{t-1};\,\lambda_{t-1}, \transitionker, \pow)
\end{equation*}
satisfy, for all $t \in \mathbb{N}$,
\begin{equation*}
\kldivergence{\target}{\mu_t}
\leq
(1-\pow)^t \kldivergence{\target}{\mu_0}\eqsp.
\end{equation*}
\end{proposition}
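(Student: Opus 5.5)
We argue by induction on $t$, using Lemma~\ref{lem:definiteness} to make sure each iterate is well defined, lies in $\probmeasuresac$ and has $\supnorm{\rmd \target/\rmd \mu_t}<\infty$. It therefore suffices to show the one-step claim: for any such $\mu$ there is $\lambda\in(0,1]$ with
\begin{equation*}
\kldivergence{\target}{\emFmap(\mu;\lambda,\transitionker,\pow)}\le (1-\pow)\,\kldivergence{\target}{\mu}\eqsp.
\end{equation*}
Iterating this gives the bound $(1-\pow)^t\kldivergence{\target}{\mu_0}$.

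For the one-step claim we view $\lambda\mapsto \kldivergence{\target}{\lambda\eFmap(\mu)+(1-\lambda)\Fmap_{\transitionker}(\mu)}$ as a function on $(0,1]$. At $\lambda=1$ it equals $\kldivergence{\target}{\eFmap(\mu)}$. We first make sure of the inequality there, checking it directly rather than only quoting \citet{korba2022adaptive}. Write $Z=\int \target^{\pow}\mu^{1-\pow}$. Then
\begin{equation*}
\kldivergence{\target}{\eFmap(\mu)}=\int\target\log\frac{\target}{\target^{\pow}\mu^{1-\pow}}+\log Z=(1-\pow)\kldivergence{\target}{\mu}+\log Z\eqsp.
\end{equation*}
By Hölder, $Z\le 1$, so $\log Z\le 0$ and the contraction holds at $\lambda=1$. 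Since $\lambda=1$ is allowed ($0<\lambda\le1$), the existence claim is already satisfied by $\lambda_t=1$.

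This makes the statement nearly trivial, so the proof should be more informative about admissible $\lambda<1$. If $Z<1$ (i.e.\ $\mu\neq\target$), we have strict slack $-\log Z>0$. We then argue by continuity: $\lambda\mapsto\kldivergence{\target}{\lambda\eFmap(\mu)+(1-\lambda)\Fmap_{\transitionker}(\mu)}$ is convex and finite on $(0,1]$. Finiteness uses the density bounds from Lemma~\ref{lem:definiteness}, since the mixture dominates $\lambda\eFmap(\mu)$, which has bounded $\rmd\target/\rmd\cdot$. Convexity follows from convexity of $-\log$. A finite convex function on $(0,1]$ is continuous at $\lambda=1$ from the left, so the inequality persists on an interval $[\lambda^\star,1]$ with $\lambda^\star<1$. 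If $\mu=\target$, then $\pi$-invariance of $\transitionker$ gives $\Fmap_{\transitionker}(\target)=\target K_\target=\target$, so every $\lambda$ works.

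The main obstacle is the continuity/finiteness argument for $\lambda<1$. Here we need $\kldivergence{\target}{\Fmap_{\transitionker}(\mu)}$ finite, or at least the mixture KL finite near $\lambda=1$. The domination bound $\log\frac{\target}{\text{mixture}}\le\log\frac{\target}{\lambda\eFmap(\mu)}$ handles this, so only the trivial choice $\lambda_t=1$ is strictly needed for the proposition as stated.
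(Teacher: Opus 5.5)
Your proof is correct, but it takes a more elementary route than the paper. You choose $\lambda_t=1$ for all $t$, so that $\mu_{t+1}=\eFmap(\mu_t)$. The claim then reduces to the Entropic Mirror Descent identity $\kldivergence{\target}{\eFmap(\mu)}=(1-\pow)\kldivergence{\target}{\mu}+\log\int f^{\pow}\,\rmd\mu$, with $f=\rmd\target/\rmd\mu$ and $\int f^{\pow}\,\rmd\mu\le 1$, together with Lemma~\ref{lem:definiteness} for the induction. The $\target$-invariance of $\transitionker$ is never used.

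The paper instead keeps a general $\lambda_t$. Convexity of the Kullback--Leibler divergence in its second argument bounds $\kldivergence{\target}{\mu_{t+1}}$ by the $\lambda_t$-combination of $\kldivergence{\target}{\eFmap(\mu_t)}$ and $\kldivergence{\target}{\Fmap_{\transitionker}(\mu_t)}$. The data processing inequality with $\target\transitionker=\target$ gives $\kldivergence{\target}{\mu_t\transitionker}\le\kldivergence{\target}{\mu_t}$. Together these yield $\kldivergence{\target}{\mu_{t+1}}\le(1-\pow)\kldivergence{\target}{\mu_t}+\lambda_t\log A_t+(1-\lambda_t)\log B_t$, where $A_t=\int f_t^{\pow}\,\rmd\mu_t\le 1$ and $B_t=\int f_t^{\pow}\,\rmd\mu_t\transitionker$. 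This gives an explicit sufficient condition on $\lambda_t$ and an explicit admissible choice: $\lambda_t=\log B_t/(\log B_t-\log A_t)$ when $B_t>1$, and any $\lambda_t\in(0,1]$ otherwise. In other words, it quantifies how much weight the exploratory kernel component can receive without losing the rate, which is what the authors actually care about. Your choice $\lambda_t=1$ is the degenerate member of that family; it shows that the existence statement, as literally written, is just the known mirror-descent contraction.

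Your supplementary convexity argument does recover a nontrivial interval $[\lambda^\star,1]$, non-explicitly and without invariance. Two details there are off, though neither affects your main proof:
\begin{itemize}
\item A finite convex function on $(0,1]$ need not be left-continuous at $1$; it may jump up at the endpoint. What the chord inequality gives, and all you need, is $\limsup_{\lambda\to1^-}g(\lambda)\le g(1)$.
\item $Z<1$ is equivalent to $\mu\neq\target$ only when $\pow<1$. For $\pow=1$ one has $Z=1$ always, so there is no slack.
\end{itemize}
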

Proposition~\ref{prop:klcontraction} establishes that the contraction property is preserved. This result also establishes that if we have not achieved $\mu_t = \pi$ and set $\epsilon = 1$, we can always find a sequence $\{\lambda_t\}_{t\in\N}$ such that, for any choice of $\target$-invariant transition kernel $\transitionker$, the mapping $\emFmap$ differs from the original Entropic Mirror Descent transformation 
\eqref{eq:map:emd}. For simplicity we state the results for fixed $\pow$, although the arguments extend directly to time-varying choices $\pow_t\in(0,1]$.

\subsection{Mapping with Unadjusted Langevin Kernel}
\label{sec:MEMD-ULA}
Proposition~\ref{prop:klcontraction} only requires
that $\transitionker$ 
is $\target$-invariant, so our mapping can be used with various Markov Chain Monte Carlo methods such as 
Metropolis-adjusted Langevin \citep[MALA,][]{besag:1994}, Hamiltonian Monte Carlo \citep[HMC,][]{neal2011mcmc} or iSIR \citep{andrieu2010particle} algorithms.

In practice, however, Markov kernels that are not $\target$-invariant may exhibit strong exploratory properties. A prominent example is the Unadjusted Langevin Algorithm, \citep[ULA,][]{roberts1996:ula}. This Markov Chain Monte Carlo algorithm is deduced 
from the Euler-Maruyama discretization of the Langevin diffusion which yields the iterates:
\begin{equation*}
X_{k+1} = X_k + \gamma \nabla \log \target(X_k) + \sqrt{2 \gamma} Z_{k+1}\eqsp,
\end{equation*}
where $(Z_{k})_{k \geq 1}$ is an \textit{i.i.d.} sequence of $d$-dimensional standard Gaussian vectors and $\gamma$ is a fixed positive step-size. We denote by $\mathsf{R}_\gamma$ the Markov kernel associated with a single step of the Unadjusted Langevin Algorithm with step size $\gamma > 0$. For all $x\in\mathbb{R}^d$, $\mathsf{R}_\gamma(x, \cdot)$ is the Gaussian distribution with mean $x + \gamma \nabla \log \target(x)$ and covariance matrix $2\gamma \mathbf{I}_d$. 
Although $\mathsf{R}_\gamma$ does not admit $\target$ as invariant distribution, it is known to possess a unique invariant distribution, denoted by $\target_\gamma$, under suitable regularity assumptions.

\begin{condition}
\label{assp:LSI}
The target distribution $\target$ satisfies a log-Sobolev inequality: there exists a constant $\lsconst > 0$ such that for all smooth functions $g : \mathbb{R}^d \to \mathbb{R}$,
\begin{equation*}
\int g^2 \log \frac{g^2}{\int g^2 \rmd \target} \,\rmd \target 
\leq
\lsconst
\int \|\nabla g\|^2 \,\rmd \target \eqsp.
\end{equation*}
\end{condition}

\begin{condition}
\label{assp:smoothness}
The log-density of the target distribution is continuously differentiable and $L$-smooth: there exists a constant $L > 0$ such that for all $x,y \in \mathbb{R}^d$,
\begin{equation*}
\|\nabla \log \target(x) - \nabla \log \target(y)\|
\leq
L \|x-y\|\eqsp.
\end{equation*}
\end{condition}

Conditions $\ref{assp:LSI}$ and $\ref{assp:smoothness}$ are standard in the sampling literature \citep{durmus:moulines:ulaconv,vempala2019rapid, chewi:ulaLSIPI, erdogdu2022convergence}. In particular, Condition $\ref{assp:LSI}$  provides a powerful tool for analyzing the exponential convergence of Markov semi-groups \citep{bakry2014analysis}.
This condition holds for a large class of probability measures, including log-concave distributions and is stable under bounded perturbations \citep{holley1986logarithmic}.

\begin{condition}
\label{assp:drift}
There exist a measurable function $V : \mathbb{R}^d \to [1,+\infty)$, a constant $b \geq 0$, and a compact set $C \subset \mathbb{R}^d$ such that
\begin{equation}
\label{eq:drift}
\mathsf{R}_\gamma V
\leq
V - 1 + b\,\mathds{1}_{C}\eqsp.
\end{equation}
\end{condition}
Condition \ref{assp:drift}  ensures the existence of an invariant distribution $\target_\gamma$ for $\ulaker$ 
\citep{meyn2012markov}.
Using only Condition \ref{assp:smoothness}, it is easily shown that for any compact set $\mathsf{K} \subset \R^d$ there exists $C \geq 0$ such that for all $x \in \mathsf{K}$ and $A \in \mathcal{B}(\R^d)$ 
\begin{equation}
\label{eq:smallset}
\ulaker(x, A) \geq C_{\mathsf{K}} \mathrm{Leb}(A \cap \mathsf{K})\eqsp,
\end{equation}
and thus $\ulaker$ is $\mathrm{Leb}$-irreducible and strongly aperiodic. Therefore, following Theorem 14.0.1 of \citet{meyn2012markov}, $\ulaker$ is positive recurrent and admits a unique invariant distribution if and only if \eqref{eq:drift} is satisfied for some petite set $C$.  In Condition \ref{assp:drift}, we only strengthen this condition on $C$ and suppose that it is compact which is satisfied in most applications.

We consider the sequence of probability measures
$\{\ulaiterate{t}\}_{t \in \mathbb{N}} \subset \probmeasuresac$
defined recursively, for $t\geq 1$, by
\begin{equation*}
\ulaiterate{t}
=
\emFmap\!\left(\ulaiterate{t-1};\, \lambda_{t-1}, \mathsf{R}_\gamma, \pow\right)\eqsp,
\end{equation*}
with initial distribution $\ulaiterate{0} \in \probmeasuresac$.

\begin{theorem}
\label{thm:tvcontraction}
Assume Conditions~\ref{assp:LSI}, \ref{assp:smoothness}, and~\ref{assp:drift} hold. Let $0 < \gamma \leq \big(2\lsconst L^2 \big)^{-1}$.
Let $\ulaiterate{0} \in \probmeasuresac$ satisfy
$\supnorm{\rmd \target / \rmd \ulaiterate{0}} < \infty$.
Then, there exists a sequence $0 < \lambda_t \leq 1$ such that
\begin{multline*}
\tvnorm{\target - \ulaiterate{t}}
\leq
(1-\pow)^{t/2}
\sqrt{2\,\kldivergence{\target_\gamma}{\ulaiterate{0}}}
\\
+
4\sqrt{2\gamma d L^2 \lsconst}\eqsp.
\end{multline*}
\end{theorem}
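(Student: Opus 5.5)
The plan is to work with $\target_\gamma$, the invariant distribution of $\ulaker$, in place of $\target$, and then add the bias at the end. By the triangle inequality,
\begin{equation*}
\tvnorm{\target - \ulaiterate{t}} \leq \tvnorm{\target_\gamma - \ulaiterate{t}} + \tvnorm{\target - \target_\gamma}\eqsp.
\end{equation*}
The first term should come from a contraction argument identical to Proposition~\ref{prop:klcontraction}, but with $\target_\gamma$ as reference measure. The second is the stationary bias of ULA.

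\textbf{Contraction step.} The aim is $\kldivergence{\target_\gamma}{\ulaiterate{t}} \leq (1-\pow)^t\kldivergence{\target_\gamma}{\ulaiterate{0}}$; Pinsker's inequality then gives the first summand $(1-\pow)^{t/2}\sqrt{2\,\kldivergence{\target_\gamma}{\ulaiterate{0}}}$.

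1. Follow the proof of Proposition~\ref{prop:klcontraction}. Since the KL divergence is jointly convex and $\lambda_t\in(0,1]$ is free, one can always choose $\lambda_t$ so that the mixture does at least as well as a pure step. Concretely, $\kldivergence{\target_\gamma}{\lambda\eFmap(\mu)+(1-\lambda)\Fmap_{\ulaker}(\mu)}$ is continuous in $\lambda$ and equals the pure-step value at $\lambda=1$.
2. So I would pick $\lambda_t$ so that the mixture's divergence is at most $(1-\pow)\kldivergence{\target_\gamma}{\mu}$, whenever the pure step achieves this bound strictly or with equality. When $\lambda_t=1$ is the only admissible value, I would take $\lambda_t$ close to $1$, which suffices by continuity.
3. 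The pure step needs to contract towards $\target_\gamma$. Here I expect to need the weights $(\rmd\target/\rmd\mu)^\pow$ to be paired with the kernel that leaves the reference invariant: $\ulaker$ leaves $\target_\gamma$ invariant, but the weights are written with $\target$.

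\textbf{Main obstacle.} This mismatch is the main difficulty, and I would handle it with the same flexibility in $\lambda_t$. The idea is to treat the mixture as a perturbation whose kernel component carries weight $1-\lambda_t$, and to pick $1-\lambda_t$ small enough that the KL contraction to $\target_\gamma$ is preserved up to the exact factor. Admissibility of such a $\lambda_t$ relies on Lemma~\ref{lem:definiteness}, which keeps every iterate in $\probmeasuresac$ with bounded density ratio, so the divergences are finite and continuous in $\lambda$. It also relies on Condition~\ref{assp:drift} and the minorization \eqref{eq:smallset}, which ensure that $\target_\gamma$ exists, is unique and is equivalent to Lebesgue measure, so that $\kldivergence{\target_\gamma}{\cdot}$ makes sense.

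\textbf{Bias step.} It remains to show $\tvnorm{\target-\target_\gamma}\leq 4\sqrt{2\gamma dL^2\lsconst}$, using the known ULA analysis under a log-Sobolev inequality (Vempala--Wibisono type).

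1. Under Conditions~\ref{assp:LSI}--\ref{assp:smoothness} and $\gamma\leq(2\lsconst L^2)^{-1}$, one step of ULA satisfies
\begin{equation*}
\kldivergence{\nu\ulaker}{\target}\leq e^{-\gamma/\lsconst}\kldivergence{\nu}{\target}+c\,\gamma^2 dL^2 \eqsp,
\end{equation*}
with an explicit constant $c$ (with the LSI normalized as in Condition~\ref{assp:LSI}).
2. Iterate this bound from any initial $\nu$. Because $\ulaker$ is positive recurrent and aperiodic, $\nu\ulaker^n\to\target_\gamma$, and lower semicontinuity of KL gives $\kldivergence{\target_\gamma}{\target}\leq c\gamma^2dL^2/(1-e^{-\gamma/\lsconst})\lesssim c'\gamma dL^2\lsconst$.
3. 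Pinsker's inequality then gives a bound of order $\sqrt{\gamma dL^2\lsconst}$. I would track the constants so that the prefactor is at most $4\sqrt2$.
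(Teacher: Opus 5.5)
\textbf{There is a genuine gap in your contraction step.} The inequality you aim for, $\kldivergence{\target_\gamma}{\ulaiterate{t}}\le(1-\pow)^t\kldivergence{\target_\gamma}{\ulaiterate{0}}$, is false in general, whatever $\lambda_t$ you choose. Both components of $\emFmap$ carry the weights $(\rmd\target/\rmd\mu)^\pow$, so the map pulls toward $\target$, not toward $\target_\gamma$.

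Take $\mu=\target_\gamma$. Since $\target_\gamma\ulaker=\target_\gamma$, you get $\Fmap_{\ulaker}(\target_\gamma)=\eFmap(\target_\gamma)\propto\target^\pow\target_\gamma^{1-\pow}$. Hence $\emFmap(\target_\gamma;\lambda,\ulaker,\pow)=\eFmap(\target_\gamma)\neq\target_\gamma$ for every $\lambda$. This case is admissible: for $\target=\gauss(0,1)$ one has $\target_\gamma=\gauss(0,(1-\gamma/2)^{-1})$, so $\rmd\target/\rmd\target_\gamma$ is bounded and all conditions hold. Starting from $\ulaiterate{0}=\target_\gamma$, your bound would force $\ulaiterate{1}=\target_\gamma$, which is impossible.

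Your remedy, making $1-\lambda_t$ small so the kernel part is a perturbation of the pure mirror step, goes the wrong way. With $f=\rmd\target/\rmd\mu$, the pure step satisfies the exact identity
\[
\kldivergence{\target_\gamma}{\eFmap(\mu)}=(1-\pow)\kldivergence{\target_\gamma}{\mu}+\pow\,\kldivergence{\target_\gamma}{\target}+\log\int f^\pow\,\rmd\mu\eqsp.
\]
With $\lambda_t\equiv1$ the iterates converge to $\target$, so by lower semicontinuity $\liminf_t\kldivergence{\target_\gamma}{\ulaiterate{t}}\ge\kldivergence{\target_\gamma}{\target}>0$. The mismatch between the weights and the invariant law of $\ulaker$ cannot be tuned away by $\lambda_t$; it leaves an additive term at every step.

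\textbf{The missing idea is to keep that term rather than eliminate it.} For the kernel component, write
\[
\log\frac{\rmd\target_\gamma}{\rmd(f^\pow\mu\ulaker)}=\log\frac{\rmd\target_\gamma}{\rmd\mu\ulaker}-\pow\log f\eqsp,
\]
integrate against $\target_\gamma$, and use data processing with $\target_\gamma\ulaker=\target_\gamma$. This gives the same bound $(1-\pow)\kldivergence{\target_\gamma}{\mu}+\pow\kldivergence{\target_\gamma}{\target}$, up to the log normalizing constant $\log\int f^\pow\rmd\mu\ulaker$.

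Now apply joint convexity, and choose $\lambda_t$ as in the proof of Proposition~\ref{prop:klcontraction}, so that $\lambda_t\log\int f_t^\pow\rmd\mu_t+(1-\lambda_t)\log\int f_t^\pow\rmd\mu_t\ulaker\le0$. This yields
\[
\kldivergence{\target_\gamma}{\ulaiterate{t+1}}\le(1-\pow)\kldivergence{\target_\gamma}{\ulaiterate{t}}+\pow\,\kldivergence{\target_\gamma}{\target}\eqsp,
\]
and therefore $\kldivergence{\target_\gamma}{\ulaiterate{t}}\le(1-\pow)^t\kldivergence{\target_\gamma}{\ulaiterate{0}}+\kldivergence{\target_\gamma}{\target}$.

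Pinsker and $\sqrt{a+b}\le\sqrt a+\sqrt b$ then make $\sqrt{2\kldivergence{\target_\gamma}{\target}}$ appear twice: once from this remainder and once from your triangle inequality. That is exactly why the constant is $4\sqrt{2\gamma dL^2\lsconst}=2\sqrt{2\cdot 4\gamma dL^2\lsconst}$. Your bias step (Vempala--Wibisono recursion, convergence of $\nu\ulaker^n$ to $\target_\gamma$, lower semicontinuity of KL) matches Proposition~\ref{prop:klstationnary} and is fine. It just has to cover the bias twice, not once.
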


Theorem~\ref{thm:tvcontraction}  separates contraction toward the ULA invariant distribution $\target_\gamma$ induced by the Entropic Mirror component (first right term), from the discretization bias to $\target$, controlled by Proposition~\ref{prop:klstationnary}(second right term).

\subsection{Monte Carlo Implementation: the EM2C Algorithm}

In most settings the ideal map $\emFmap$ is intractable and can only be accessed through samples. Algorithm~\ref{alg:approxiterates} replaces the oracle sequence \eqref{eq:memd:iterates} by surrogate proposals $(\approxit{t})_{t\in\mathbb{N}}$ obtained via Monte Carlo sampling, weighting, resampling, and projection onto a tractable parametric family. 
Namely, given the current proposal, say $\approxit{t}$, we sample $(X^{1:N}_{t})$ from $\approxit{t}$ and propagate them through a Markov kernel $(Y^{1:N}_{t})$, which yields the empirical distribution
\begin{equation}
\label{eq:memd:map:approx}
\Fmc{t}
=
\lambda_t \sum_{i=1}^N \omega_t^i \delta_{X_t^i}
+
(1-\lambda_t)\sum_{i=1}^N \varpi_t^i \delta_{Y_t^i}\eqsp.
\end{equation}
The empirical measure $\Fmc{t}$ cannot be directly reused as the next proposal since propagated particles typically lie outside the support of a purely empirical measure, making subsequent density ratios ill-defined. 
We therefore project onto a tractable parametric family $\paramfamily$ to allow importance weights computation. The surrogate distribution within Algorithm~\ref{alg:approxiterates} is then computed as the minimizer of a loss function $\mathcal{R}$ between $\mu_\param \in \paramfamily$ and the empirical measure $\Fmc{t}$.

The projection onto $\paramfamily$ can be achieved in several ways. 
We begin by describing two common situations, both of which consist to choosing $\mathcal{R}$ as the Kullback--Leibler, and minimizing, with respect to $\param$, the Kullback--Leibler divergence between the sample distribution and the distribution $\mu_\param$.

\begin{algorithm}[tb]
\caption{Entropic Mirror Monte Carlo (\EMC)}
\label{alg:approxiterates} 
\begin{algorithmic}
 \STATE {\bfseries Input:} number of iterations $T$, number of samples $N$, initial distribution $\mu_{0}$, parameters $0 < \pow \leq 1$, $0 < \lambda_t \leq 1$ $(0 \leq t \leq T-1)$, Markov kernel $K$.
\STATE {\bfseries Output:} proposal distribution $\approxit{T}$.
 
\FOR{$t = 0$ {\bfseries to}  $T-1$}
\STATE Sample $X^{1:N} _t \sim \approxit{t}^{\otimes N}$.
\STATE Sample $Y^{1:N} _t \sim K(X^1 _t, \cdot) \otimes \cdots \otimes K(X^N _t, \cdot)$.
\FOR{$i = 1$ {\bfseries to}  $N$} 
\STATE Compute
\begin{equation*}
\omega^i _t \propto \left\lbrace\frac{\rmd \target}{\rmd \approxit{t}} \big(X^i _t\big)\right\rbrace^\pow, \quad
\varpi^i _t \propto \left\lbrace\frac{\rmd \target}{\rmd \approxit{t}} \big(Y^i _t\big)\right\rbrace^\pow\eqsp.
\end{equation*}
\ENDFOR
\FOR{$i = 1$ {\bfseries to}  $N$} 
\STATE Sample
\begin{equation*}
Z^{i} _t \sim \Fmc{t} = 
\lambda_{t}
\sum_{j = 1}^N \omega^j _t \delta_{X^j _t} 
+ 
\big(1 - \lambda _{t}\big) \sum_{j =1}^N \varpi^j_t \delta_{Y^j_t}.
\end{equation*}
\ENDFOR
\STATE Compute
\begin{equation*}
\approxit{t+1} = \argmin_{\mu_\param \in \paramfamily} \mathcal{R}\left(\Fmc{t}, \mu_{\param}\right).
\end{equation*}
 \ENDFOR
 \end{algorithmic}
\end{algorithm}

\paragraph{Mixture models and EM algorithm. }
For mixture models with base probability distributions $\{f_{\eta_k}\;;\; 1\leq k \leq K \}$, 
where $\eta_k$ denotes the $k$-th component parameter,
we have:
\begin{equation*}
\paramfamily = \left\{\sum_{k=1}^K p_k \, f_{\eta_k}\;;\; \sum_{k=1}^K p_k = 1\,,\, p_k \geq 0\right\}\eqsp.
\end{equation*}
Estimation is typically carried out via the Expectation Maximization (EM) algorithm, which produces a sequence of parameter estimates by maximizing approximately the loglikelihood function:
\begin{equation*}
\mathcal{R}\left(\Fmc{t}, \mu_{\param}\right) = \sum_{i=1}^N \log \left( \sum_{k=1}^K p_k \, f_{\eta_k}(Z_t^i) \right)\eqsp,
\end{equation*}
where $Z_t^{1:N}$ are samples from $\Fmc{t}$.

\paragraph{Normalizing flows. }
In recent years, several works have introduced hybrid methods which combine normalizing flows with MCMC algorithms to enhance sampling performance \cite{gabrie2022adaptive,grenioux2023sampling}. Normalizing flows aim at learning a flexible parametric distribution from $Z_t^{1:N}\sim\Fmc{t}$ by constructing an invertible and differentiable transformation that maps a simple base distribution to the distribution of $Z_t^{1:N}$. Consider a family of bijective mappings $f_\theta : \mathbb{R}^d \to \mathbb{R}^d$ parameterized by $\theta\in\Theta$, such that a random variable $Z$ drawn from a tractable base density $p_Z$ is transformed into $X = f_\theta(Z)$ with induced density given by the change-of-variables formula
\begin{equation*}
p_\theta(x) = p_Z\bigl(f_\theta^{-1}(x)\bigr) \left| \det D f_\theta^{-1}(x) \right|\eqsp,
\end{equation*}
where $D f_\theta^{-1}$ is the Jacobian of $f_\theta^{-1}$. Therefore,
\begin{equation*}
\paramfamily = \left\{x\mapsto p_Z\bigl(f_\theta^{-1}(x)\bigr) \left| \det D f_\theta^{-1}(x) \right|\;;\; \theta\in\Theta\right\}\eqsp.
\end{equation*}
and the log-likelihood objective is then given by
\begin{equation*}
\mathcal{R}\left(\Fmc{t}, \mu_{\param}\right) = \sum_{i=1}^N \log p_\theta(Z_t^i)\eqsp,
\end{equation*}
which can be equivalently expressed as
\begin{multline*}
\mathcal{R}\left(\Fmc{t}, \mu_{\param}\right) 
= \sum_{i=1}^n  \log p_Z\bigl(f_\theta^{-1}(Z_t^i)\bigr) \\+ \sum_{i=1}^n \log \left| \det D f_\theta^{-1}(Z_t^i) \right| \eqsp. 
\end{multline*}
Although the projection is essential for implementation, the contraction guarantees for the ideal $\emFmap$ do not generally extend to the projected sequence
$(\approxit{t})_{t\in\mathbb{N}}$ used in practice.
The approximation error,
that depends on the expressiveness of $\paramfamily$ and the numerical
procedure employed, may weaken or break the contraction. In particular, if $\paramfamily$ lacks the relevant multimodal structure, the projection can collapse distinct modes, even when weighted particles successfully reach them.

Nevertheless, our experiments show that Algorithm \ref{alg:approxiterates} remains
stable and effective in practice.

\section{Numerical Study}
\label{sec:exp}

In this section, we present numerical experiments\footnote{The code used is available at \url{https://github.com/jstoehr/EM2C}.} assessing the performance of Algorithm~\ref{alg:approxiterates} on multimodal target distributions. 
We evaluate mode recovery, the effect of $\lambda$ and the exploration kernel, and the discrepancy between the learned proposal $\tilde\mu_T$ and the target distribution $\pi$ using 
: sliced Wasserstein distance, energy distance, and negative log-likelihood. Definitions and computation are provided in Appendix~\ref{app:sw}--\ref{app:bnn:nll}. All methods are run under comparable computational budgets measured in terms of target density and gradient evaluations (Appendix~\ref{sec:cost}). 
For EM2C, the number of kernel evaluations is slightly reduced to account for the additional cost of the projection step.

\subsection{Multidimensional Gaussian Mixture Benchmarks}
\label{sec:gmm-benchmarks}

\paragraph{Target distributions.}
We first consider targets built as a tensor product of one of the following two-dimensional Gaussian mixture models (see Figure~\ref{fig:gm_benchmarks}).

\begin{itemize}[leftmargin=*]
\item \textbf{GM2 (two-component, unbalanced)}
\begin{equation*}
\widetilde\pi_1
=
0.2\,\mathcal N(\mathbf 0_2, I_2)
+
0.8\,\mathcal N\!\left(
\begin{bmatrix} 20 \\ 20 \end{bmatrix},
\begin{bmatrix} 10 & -4 \\ -4 & 3 \end{bmatrix}
\right)\eqsp.
\end{equation*}
  
\item \textbf{GM4 (four-component, anisotropic)}
\begin{equation*}
\widetilde\pi_2
=
0.25 \sum_{j=1}^4 
\mathcal N\!\left(m_j, 
\begin{bmatrix} 3 & 4 \\ 4 & 10 \end{bmatrix}\right)\eqsp,
\end{equation*}
with $m_1 = (-10,10)^\top$, $m_2 = (10,-10)^\top$, $m_3 = (15,15)^\top$, and $m_4 = (-15,-15)^\top$.

\item \textbf{GM25 (twenty-five-component, isotropic)}
\begin{equation*}
\widetilde\pi_3 
=
\frac{1}{25}
\sum_{\ell=0}^4 \sum_{k=0}^4 
\mathcal N\!\left(
\begin{bmatrix} 5\ell \\ 5k \end{bmatrix},
0.25 I_2
\right)\eqsp.
\end{equation*}

\end{itemize}

Given an even dimension $d$ and a mixture distribution $\widetilde{\target}_i$, the target writes as
\begin{equation}
\label{eq:target-mixt}
\target_{i,d} = \widetilde{\target}_i^{\otimes d/2},
\quad i = 1, 2, 3\eqsp.
\end{equation}

\begin{figure}[t!]
\centering
\begin{subfigure}{0.32\columnwidth}
\centering
\includegraphics[width=\linewidth]{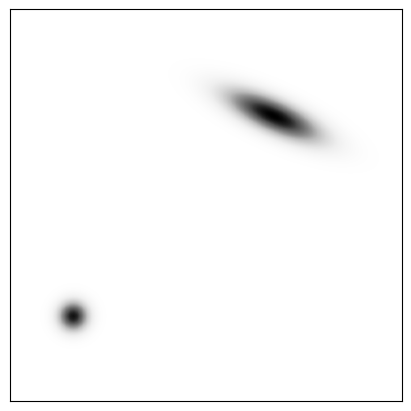}
\caption{GM2}
\end{subfigure}
\hfill
\begin{subfigure}{0.32\columnwidth}
\centering
\includegraphics[width=\linewidth]{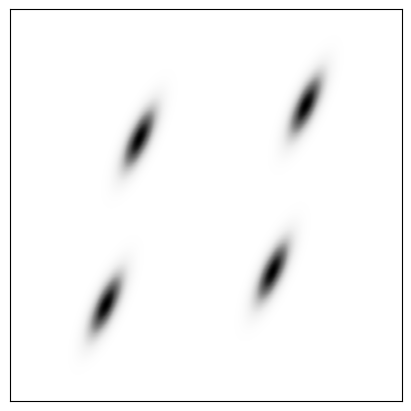}
\caption{GM4}
\end{subfigure}
\hfill
\begin{subfigure}{0.32\columnwidth}
\centering
\includegraphics[width=\linewidth]{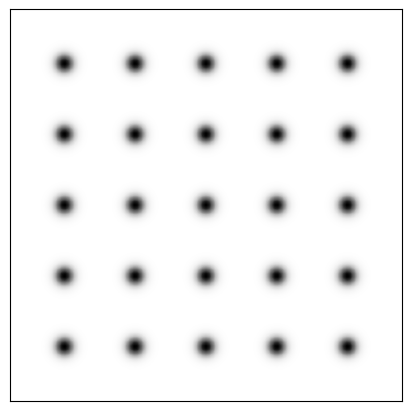}
\caption{GM25}
\end{subfigure}
\caption{Marginal Gaussian mixture distribution $\widetilde{\target}_i$ associated with the benchmark target distributions $\pi_{i,d}$ as defined in \eqref{eq:target-mixt}.}
\label{fig:gm_benchmarks}
\end{figure}

\paragraph{Experimental design. }
In all experiments, the initial proposal is a Gaussian distribution $\mu_0 = \mathcal N(30\,\mathbf 1_d, I_d)$, 
placing the initial mass in a low-probability region of the target and providing
a common, challenging initialization across all models.

The variational family $\paramfamily$ is chosen as a tensorized Gaussian mixture model,
and the projection step is implemented via an expectation--maximization (EM) algorithm.
Implementation details, including the structure of the variational family and EM
hyperparameters, are provided in Appendix~\ref{app:gm-configs}.

Algorithm \ref{alg:approxiterates} is run with $\varepsilon = 0.8$,
$N = 2{,}000$ particles per iteration, and constant mixing parameters $\lambda \in \{0.5, 0.8, 1.0\}$.
The number of EM2C iterations $T$ and the parameters of the exploration kernel $K$
are tuned to the target model and the ambient dimensions $d$ to account for differences
in scale and geometry of the target $\pi$. 

For each setting, the exploration kernel $K$ is chosen either as a $\pi$-invariant
Random Walk (RW) Metropolis kernel, with variance parameter $\sigma_K^2$,
or as ULA, with step size $\gamma_K$.
At each EM2C iteration, the selected kernel is applied for a fixed number of $n_K$ steps.
Complete kernel configurations and experimental settings for all targets and dimensions are given in Appendix \ref{app:gm-configs}.

\paragraph{Results.}
Figure~\ref{fig:gm4_evolution} illustrates the evolution of the proposal distribution obtained with Algorithm \ref{alg:approxiterates} with respect to the GM4 target in dimension $d = 10$.
For each value of $\lambda$, we compare the behaviour of EM2C using RW or ULA kernel.
It shows the impact of both the mixing parameter and the exploration kernel on the preservation of multimodality: smaller values of $\lambda$ promote exploration and help
maintaining multiple modes, while larger values may result in mode imbalance when exploration is insufficient.
Additional qualitative evolution plots for the other mixture targets and dimensions are provided in Appendix~\ref{app:gm-qualitative}. Overall, the ULA kernel consistently outperforms the RW kernel, as the latter may suffer from frequent rejections in low-density regions,
whereas ULA leverages gradient information to guide proposals toward regions of high target density.
\begin{figure}[t!] 
\centering 
\begin{subfigure}{\linewidth} 
\centering 
\includegraphics[
width=\linewidth,
  ]{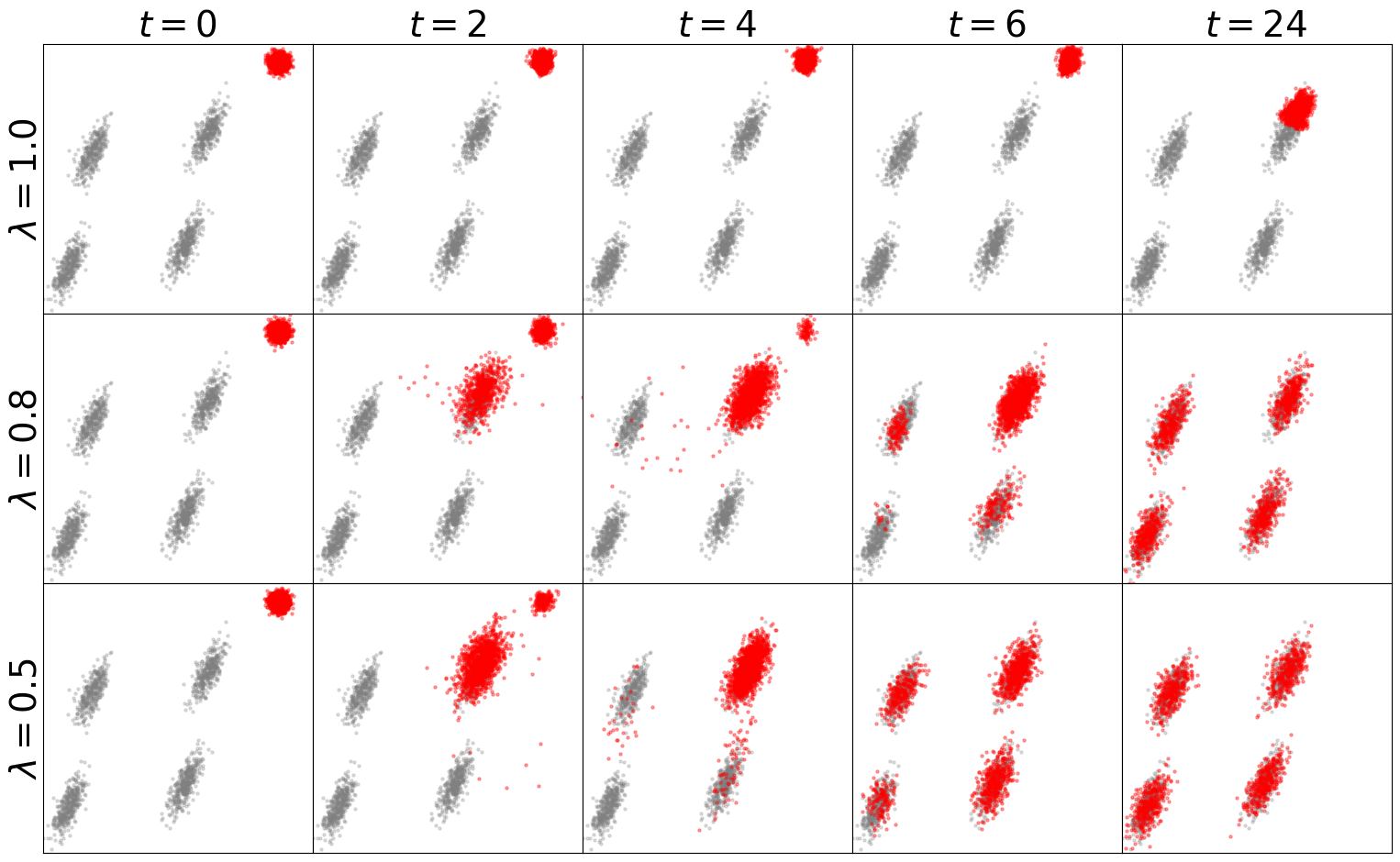} 
\caption{Unadjusted Langevin exploration kernel}
\label{fig:gm4_ula} 
\end{subfigure} 

\vspace{.75\baselineskip} 

\begin{subfigure}{\linewidth} 
\centering 
\includegraphics[
width=\linewidth,
  ]{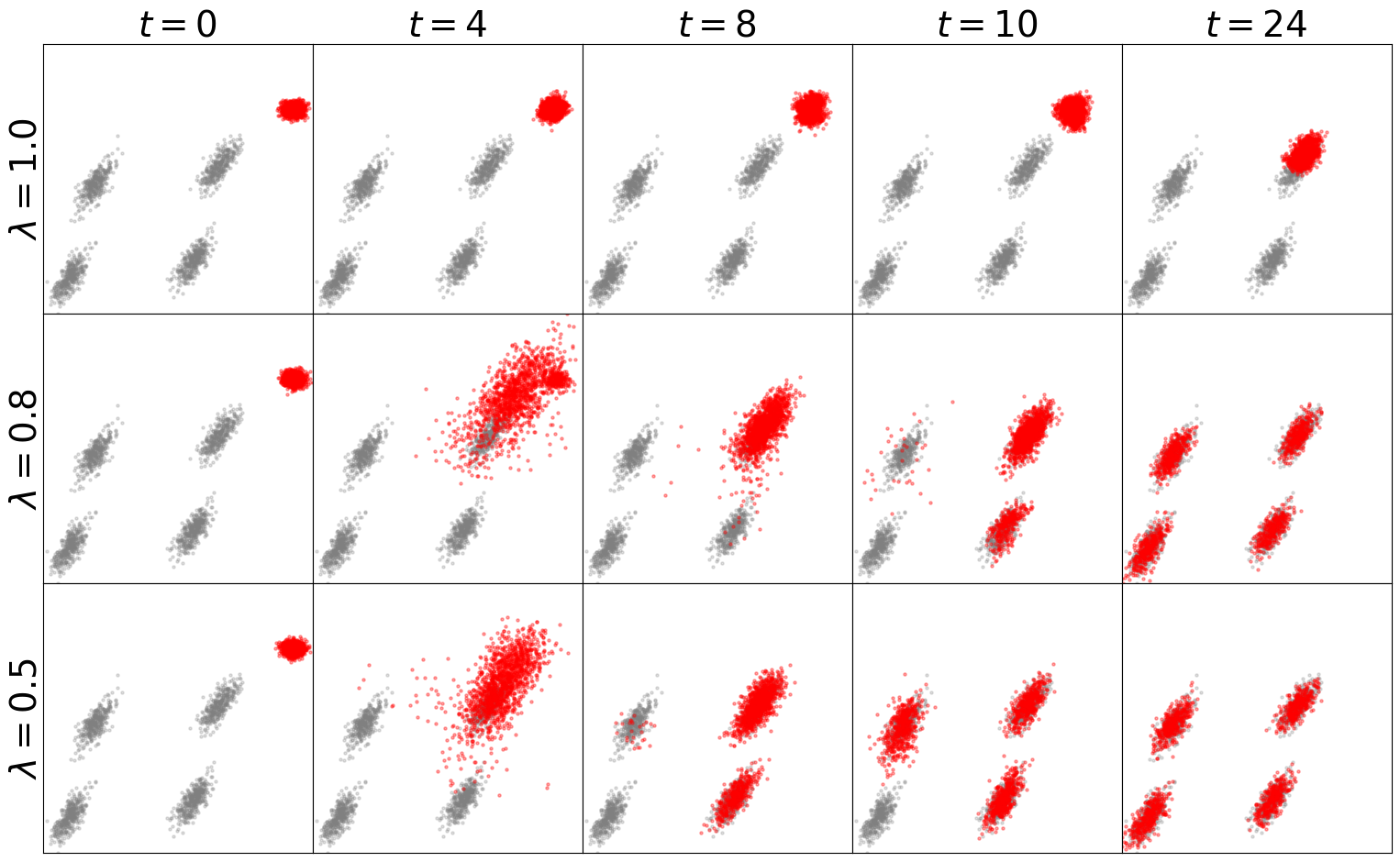} \caption{Random Walk Metropolis--Hastings exploration kernel} 
\label{fig:gm4_rw} 
\end{subfigure} 
\caption{ Evolution of EM2C proposal distributions for the GM4 target ($d=10$). Rows correspond to $\lambda \in \{0.5, 0.8, 1.0\}$, columns to EM2C iterations. 
Gray and red points are samples from $\pi$ and $\tilde{\mu}_t$, respectively.
The marginal samples on the first two coordinates are displayed.
} 
\label{fig:gm4_evolution} 
\end{figure}

\begin{figure}[t!]
\centering
\includegraphics[width=\linewidth]{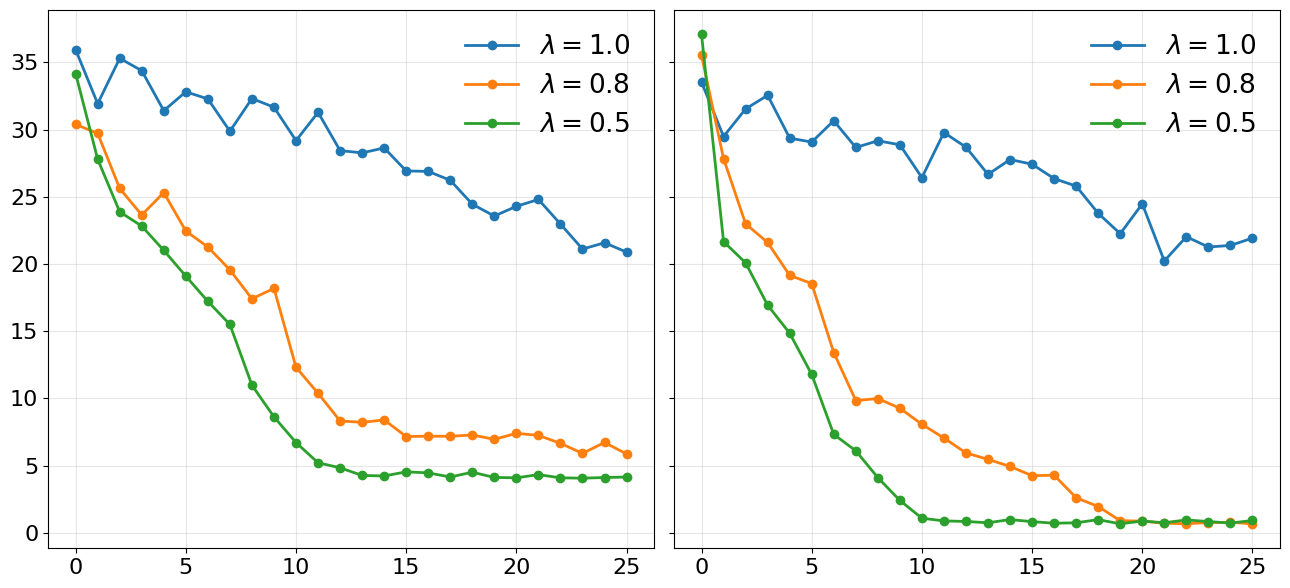}
\caption{
Evolution of the sliced Wasserstein distance 
over EM2C iterations for the GM4 target ($d=10$),
shown for Random Walk (left) or Unadjusted Langevin (right) exploration kernels and $\lambda\in\{0.5,0.8,1.0\}$.
}

\label{fig:sw_gm4_d10}
\end{figure}

This qualitative behaviour is further reflected in the evolution of the sliced Wasserstein distance across EM2C iterations.
For the GM4 benchmark with $d=10$, Figure~\ref{fig:sw_gm4_d10} shows that smaller values of $\lambda$ lead to a rapid reduction of the discrepancy
between the proposal and the target, whereas $\lambda=1$ results in poor convergence. Final sliced Wasserstein distances at convergence for the GM4 benchmark with dimensions $d=10$ and $d=20$ are reported in Table~\ref{tab:sw_gm4}. Additional sliced Wasserstein results for the other targets and dimensions are provided in Appendix~\ref{app:gm-sw-additional}.

Appendix~\ref{app:mcmc-comparison} gives a matched-budget comparison of \EMC with standard MCMC on the GM4 benchmark ($d=10$).

\begin{table}[t!]
\centering
\caption{Final sliced Wasserstein distances of EM2C for the GM4 benchmark with dimensions $d=10$ and $20$, using Random Walk (RW) or Unadjusted Langevin Algorithm (ULA) exploration kernels. We report the mean $\pm$ standard deviation over three independent runs.
}
\label{tab:sw_gm4}
\begin{tabular*}{\linewidth}{@{\extracolsep{\fill}}cccc}
\toprule
$d$ & $\lambda$ & RW & ULA \\
\midrule
10 & 1.0 & 20.60 $\pm$ 0.40 & 20.41 $\pm$ 1.07 \\
10 & 0.8 & 4.42  $\pm$ 2.64 & 0.81  $\pm$ 0.17 \\
10 & 0.5 & 2.09  $\pm$ 1.80 & 0.84  $\pm$ 0.03 \\
\midrule
20 & 1.0 & 24.50 $\pm$ 1.08 & 23.28 $\pm$ 0.70 \\
20 & 0.8 & 7.82  $\pm$ 0.52 & 5.37  $\pm$ 0.68 \\
20 & 0.5 & 1.81  $\pm$ 0.58 & 3.13  $\pm$ 0.72 \\
\bottomrule
\end{tabular*}
\end{table}

\paragraph{High-dimensional regime with fixed number of modes.}

The tensorized targets in \eqref{eq:target-mixt} have a number of modes growing exponentially with the dimension, which limits scalability studies in very high dimension.
To isolate the effect of ambient dimensionality, we consider an additional Gaussian mixture benchmark in dimension $d=200$ with only $K=4$ modes, and where multimodality lies in a low-dimensional subspace and the remaining coordinates correspond to isotropic Gaussian noise (see Appendix \ref{app:high_dim_gmm}).

Results in Appendix~\ref{app:high_dim_gmm} show that EM2C recovers the modes despite the challenging initialization and high ambient dimension, highlighting its robustness to irrelevant dimensions.

\subsection{Multimodal Targets with Complex Geometry}
\label{sec:2d_benchmarks}

We further assess the performance of EM2C on a collection of target distributions exhibiting strong multimodality and nontrivial geometry.
Such benchmarks are commonly used to stress-test sampling algorithms,
as they combine disconnected modes, curved supports, and regions of low probability
that hinder exploration.

\paragraph{Target distributions.}
We study the following target distributions on $\mathbb{R}^2$.
\begin{itemize}[leftmargin=*]
\item \textbf{Dual Moons}: two interleaving half-circles with additive Gaussian noise. For $x=(x_1, x_2)$, $(a, b) = (0.09, 0.08)$
\begin{equation*}
\pi(x) \propto \bigl(1+e^{4x_1/a}\bigr)
\exp\!\left\{-\frac{(\|x\|-1)^2}{b}-\frac{(x_1-2)^2}{2a}\right\}\eqsp.
\end{equation*}
\item \textbf{Two Rings}: two concentric annular distributions with different radii. For $\sigma = 0.1$,
\begin{equation*}
\pi(x) \propto \frac{1}{\|x\| }
\sum_{k \in \{1,4\}}
\exp\!\left(
-\frac{(\|x\| - k)^2}{2 \sigma^2}
\right)\eqsp.
\end{equation*}

\end{itemize}

\paragraph{Experimental design.}
We evaluate EM2C under challenging initializations, where the initial distribution places little on the target support or covers only a subset of the modes.
Such settings arise when prior information is limited or misspecified and make exploration essential.
By contrast, initializing from a broad distribution already covering the target may allow
standard samplers to perform well, 
but mask the intrinsic difficulty of multimodal sampling.
We therefore focus on regimes where exploration is critical and where combining entropic reweighting with Markovian exploration provides a tangible advantage.

In Algorithm \ref{alg:approxiterates}, the projection step is achieved using normalizing flows as the variational family $\paramfamily$ (see Appendix~\ref{app:2d-em2c-config} for details).

We compare EM2C against
(i) Random Walk MCMC Sampler (RW),
(ii) the No-U-Turn Sampler \citep[NUTS,][]{hoffman2014no},
(iii) Annealed Importance Sampling \citep[AIS,][]{neal2001annealed}, and (iv) Diffusive Gibbs Sampling \citep[DIGS,][]{chen2024diffusive}, a diffusion-based MCMC method combining Gaussian perturbation steps with local Langevin dynamics to improve transitions between distant modes. 
For a given target, all methods share the same initialization and number of particles ($N=10{,}000$). 
Complete implementation details and hyperparameters are provided in Appendix~\ref{app:impl-2d}.

\begin{figure}[!t]
\newcommand*{\figwidth}{0.16\columnwidth}
\centering
\setlength{\tabcolsep}{1.pt} 
\renewcommand{\arraystretch}{0.9} 

\begin{tabular}{cccccc}
 \textbf{Initial} &
 \textbf{RW} &
 \textbf{NUTS} &
 \textbf{AIS} &
 \textbf{DiGS} &
 \textbf{EM2C}
 \\[-0.2em]

\includegraphics[width=\figwidth]{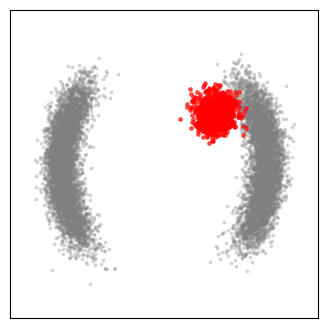} &
\includegraphics[width=\figwidth]{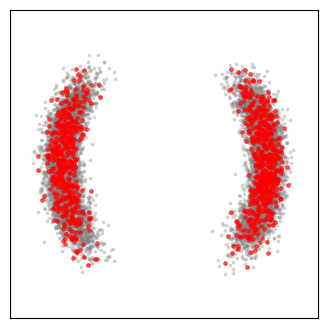} &
\includegraphics[width=\figwidth]{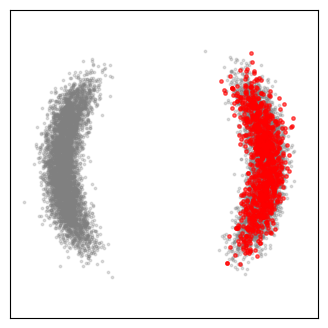} &
\includegraphics[width=\figwidth]{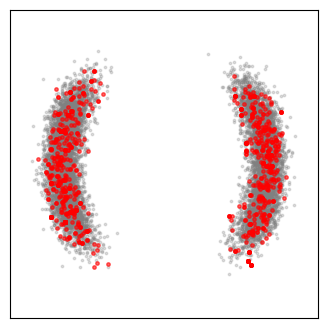} &
\includegraphics[width=\figwidth]{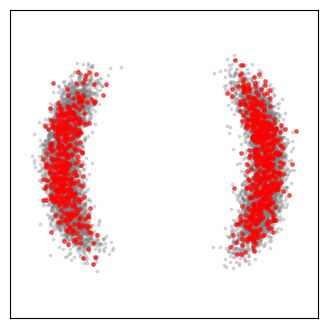} &
\includegraphics[width=\figwidth]{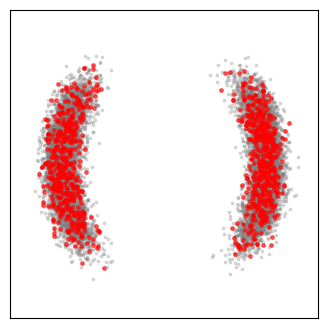}
\\[-0.4em]

\includegraphics[width=\figwidth]{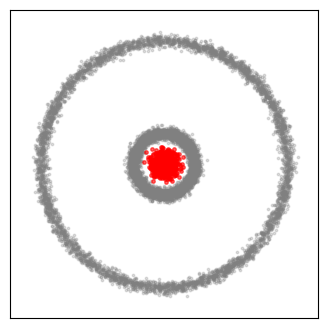} &
\includegraphics[width=\figwidth]{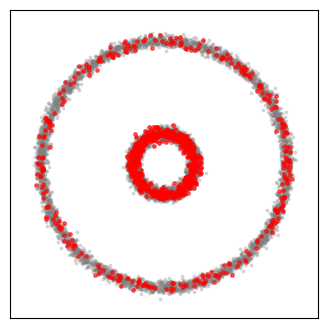} &
\includegraphics[width=\figwidth]{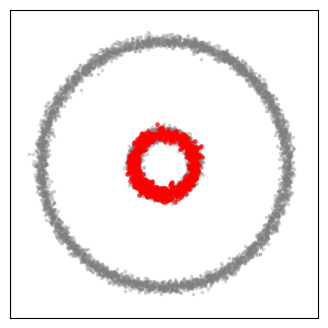} &
\includegraphics[width=\figwidth]{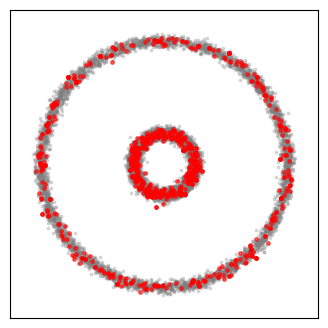} &
\includegraphics[width=\figwidth]{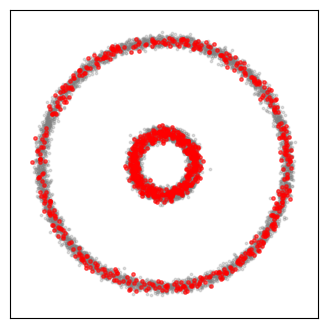} &
\includegraphics[width=\figwidth]{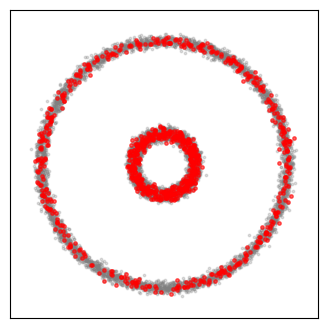}

\end{tabular}

\vspace{-0.5em}
\caption{
Qualitative comparison of sampling methods on two-dimensional multimodal targets: Dual Moons (top row), Two Rings (bottom row).
Gray and red points are samples from $\pi$ and $\tilde{\mu}_t$, respectively.
}
\vspace{-0.8em}
\label{fig:2d_benchmarks}
\end{figure}

\paragraph{Results.}
Figure~\ref{fig:2d_benchmarks} presents qualitative density estimates obtained by each method at convergence.
Under the challenging initialization considered here, the behaviour of the compared methods differs markedly.
Random-Walk MCMC and AIS both recover the main modes of the target distributions under the considered initialization, with AIS generally providing lower discrepancies than RW.
In contrast, NUTS typically concentrates on a single mode in this setting and fails to explore the full target support.
 DiGS also performs strongly on these benchmarks, achieving low discrepancies and accurate recovery of the target geometry.
In comparison, EM2C yields the lowest discrepancies across both benchmarks while maintaining robust mode coverage under the considered initialization. Additional qualitative results illustrating the evolution of EM2C proposals on two-dimensional benchmarks are provided in Appendix~\ref{app:2d-exp-details}.

These qualitative observations are corroborated by quantitative evaluations based on both the sliced Wasserstein distance and the energy distance.
Table~\ref{tab:2d_metrics} reports the final $\mathrm{SW}_2(\mu_T,\pi)$ and $\mathrm{ED}(\mu_T,\pi)$  values for all considered datasets and methods.
EM2C achieves systematically lower discrepancies under challenging initialization regimes, confirming its improved ability to explore complex target geometries.

\begin{table}[!t]
\centering
\caption{
Sliced Wasserstein distance ($\mathrm{SW}_2$) and energy distance ($\mathrm{ED}$) on two-dimensional benchmarks.
We report the mean $\pm$ standard deviation over three runs.
}
\label{tab:2d_metrics}

\setlength{\tabcolsep}{3pt}
\renewcommand{\arraystretch}{0.95}

\begin{subtable}{\columnwidth}
\centering
\caption{Dual Moons}
\resizebox{\textwidth}{!}{\begin{tabular}{@{}lccccc@{}}
\toprule
 & RW & NUTS & AIS & DiGS & EM2C \\
\midrule
$\mathrm{SW}_2$ &
0.62 $\pm$ 0.01 &
1.66 $\pm$ 0.05 & 
0.34 $\pm$ 0.19 & 
0.074 $\pm$  0.037 &
\textbf{0.071 $\pm$ 0.016} \\

$\mathrm{ED}$ &
0.082 $\pm$ 0.011 &
1.40 $\pm$ 0.01 &
0.06 $\pm$  0.07 & 
0.0005 $\pm$ 0.0003 &
\textbf{0.0004 $\pm$ 0.0001} \\
\bottomrule
\end{tabular}}
\end{subtable}

\vspace{.5em} 

\begin{subtable}{\columnwidth}
\centering
\caption{Two Rings}
\resizebox{\textwidth}{!}{\begin{tabular}{@{}lccccc@{}}
\toprule
 & RW & NUTS & AIS & DiGS & EM2C \\
\midrule
$\mathrm{SW}_2$ &
0.55 $\pm$ 0.01 &
1.41 $\pm$ 0.001 & 
0.22 $\pm$ 0.18 & 
0.053 $\pm$ 0.011 &
\textbf{0.049 $\pm$ 0.008} \\

$\mathrm{ED}$ &
0.056 $\pm$ 0.002 &
0.434 $\pm$ 0.001 & 
0.009 $\pm$ 0.067 &  
0.0008 $\pm$ 0.0002 &
\textbf{0.0006 $\pm$ 0.0002} \\
\bottomrule
\end{tabular}}
\end{subtable}

\end{table}

When initialized from a sufficiently broad proposal covering the target support, the performance gap between methods narrows, as exploration becomes less critical.
In the more challenging and practically relevant settings considered here, EM2C nevertheless exhibits superior robustness to initialization and improved mode recovery.

\subsection{Bayesian Neural Network Posterior}
\label{sec:bnn}

We evaluate EM2C on Bayesian neural network (BNN) posterior sampling, a high-dimensional multimodal and complex inference task, following the setup of \citet{chen2024diffusive}.

\paragraph{Target distribution.}
Let $\mathcal{D}_{\mathrm{train}} = \{(x_i, y_i)\}_{i=1}^N$ be a dataset. The posterior over network parameters $\theta \in \mathbb{R}^d$ is defined as
\begin{equation}
\pi(\theta) = p(\theta \mid \mathcal{D}_{\mathrm{train}}) \propto p(\theta)\prod_{i=1}^N p(y_i \mid x_i, \theta)\eqsp.
\end{equation}

We use a one-hidden-layer ReLU network
\begin{equation*}
f_\theta(x)=W_2^\top\sigma\!\left(d_x^{-1/2}x^\top W_1+0.1b_1\right),
\end{equation*}
with $\theta=(W_1,b_1,W_2)$, $d_x=20$, $d_h=25$, and parameter dimension $d=550$.
The prior is $\mathcal N(0,I)$ and the likelihood $p(y \mid x, \theta)$ is Gaussian with noise $\sigma_n=0.1$.

\paragraph{Experimental design.}
We generate data as in \citet{chen2024diffusive} using synthetic datasets generated from a ground-truth parameter $\theta^\star \sim p(\theta)$, and compare EM2C with MALA and DiGS.
MALA also serves as the underlying Langevin transition used in both EM2C and DiGS.
EM2C uses a Gaussian mixture with $20$ components and diagonal-covariances for projection. All methods are initialized from the prior and use $N=500$ samples for evaluation.
Detailed implementation and hyperparameters are provided in Appendix~\ref{app:bnn}.

\paragraph{Results.}
We evaluate performance using the negative log-likelihood (NLL) on a held-out test set, computed as the average test log-likelihood under parameters sampled from the learned distribution $\tilde{\mu}_T$ (see Appendix~\ref{app:bnn:nll} for details). Results are summarized in Table~\ref{tab:bnn}.

EM2C achieves the lowest negative log-likelihood, indicating improved posterior exploration and predictive performance. 
While DiGS already provides a strong baseline on this task, the additional adaptive reweighting mechanism in EM2C further enhances mode exploration.

The larger standard deviations of EM2C in Table~\ref{tab:bnn} reflect its population-based proposal updates: they improve exploration but introduce additional fluctuations compared with MALA and DiGS, which average samples along a small number of long Markov chains.

Overall, EM2C provides a favorable trade-off between exploration and stability, achieving improved predictive performance while remaining robust across runs.

\begin{table}[t]
\centering
\caption{
Test negative log-likelihood (NLL) for Bayesian neural network posterior sampling.
Mean $\pm$ standard deviation over $10$ runs.
}
\label{tab:bnn}
\resizebox{\columnwidth}{!}{%
\begin{tabular}{lccc}
\toprule
Method & MALA & DiGS & EM2C \\
\midrule
NLL &
0.2219 $\pm$ 0.0170 &
0.2016 $\pm$ 0.0128 & 
\textbf{0.1294 $\pm$ 0.0707} \\
\bottomrule
\end{tabular}
}
\end{table}

\section{Discussion}
In this paper, we propose a new algorithm to build sequentially proposal distributions combining a contractive Entropic Mirror Descent step with an exploratory Markov kernel component. When using the Unadjusted Langevin kernel, we show that the bias induced by the discretization remains controlled and does not hinder convergence, while preserving the contraction properties of the method. We also introduce an importance sampling scheme to approximate the idealized algorithm and demonstrate its effectiveness numerically. 
The main limitations and directions for future work include 
the deployment of the method in challenging very high-dimensional settings, notably with deep learning architectures, and the theoretical analysis of contraction properties for the Monte Carlo procedure. 
Another important direction concerns the practical tuning of $\lambda_t$. Indeed, the existence results are constructive only at the proof level, since admissible values depend on intractable expectations. Our result is nevertheless important as it establishes the existence of a range of $\lambda_t$ ensuring contraction. The optimal tuning of $\lambda_t$, jointly with other hyperparameters, remains challenging and is left for future work.

\section*{Acknowledgements}
Via Julien Stoehr, this project has received funding from the European Union (ERC-2022-SYGOCEAN-101071601). Views and opinions expressed are however those of the authors only and do not necessarily reflect those of the European Union or the European Research Council Executive Agency. Neither the European Union nor the granting authority can be held responsible for them.

\bibliographystyle{abbrvnat}
\bibliography{EM2C}

\newpage
\onecolumn

\appendix
\onecolumn

\section{Proofs}

\subsection{Proof of Lemma \ref{lem:definiteness}}
\label{proof:definiteness}
If $\supnorm{\rmd \target / \rmd \mu} < \infty$ then 
\begin{equation*}
\int \left( \frac{\rmd \target}{\rmd \mu} \right)^\pow \rmd \mu\transitionker < \infty;
\end{equation*}
so $\emFmap(\mu;\ \lambda, \transitionker, \pow)$ is indeed a probability measure. 
Next, since $0 \leq \pow \leq 1$ and $0 < \lambda \leq 1$, we have that 
\begin{equation*} 
\sup_{x \in \R^d} \frac{\rmd \target}{\rmd \emFmap(\mu;\ \lambda, \transitionker, \pow)}(x) \leq \sup_{x \in \R^d } \left\{ \frac{1}{\lambda} \frac{\rmd \target}{\rmd \mu}(x)^{1 - \pow} \int \frac{\rmd \target}{\rmd \mu}(y)^\pow \mu(\rmd y) \right\} < \infty\eqsp.
\end{equation*}

\subsection{Proof of Proposition \ref{prop:klcontraction}}
For any sequence $0 < \lambda_t \leq 1$, $t \in \mathbb{N}$, the convexity of the Kullback-Leibler divergence leads to
\begin{align*} 
\kldivergence{\target}{\mu_{t+1}} & \leq \lambda_{t} \kldivergence{\target}{\eFmap(\mu_{t})} + (1 - \lambda_{t}) \kldivergence{\target}{\Fmap_{\transitionker}(\mu_{t})} \\
& \leq \lambda_{t} (1 - \pow)\kldivergence{\target}{\mu_{t}} + (1 - \lambda _{t}) \left\lbrace \kldivergence{\target}{\mu_{t}\transitionker} - \pow \kldivergence{\target}{\mu_{t}} \right\rbrace \\
& \hspace{1cm} + \lambda_{t} \log \int f_t ^\pow \rmd \mu_{t}  + (1 - \lambda_{t}) \log  \int f_t ^\pow \rmd \mu_{t} \transitionker\eqsp,
\end{align*}
where $f_t$ stands for the Radon-Nikodym derivative of $\target$ with respect to $\mu_t$. The Data Processing inequality of \cite{van2014renyi} combined with the $\target$ invariance for $\transitionker$ yield  
\begin{equation*} 
\kldivergence{\target}{\mu_{t} \transitionker} = \kldivergence{\target \transitionker}{\mu_{t} \transitionker} \leq \kldivergence{\target}{\mu_{t}}\eqsp,
\end{equation*}
and so
\begin{equation}
\label{eq:upper:bound:memd:iterates}
\kldivergence{\target}{\mu_{t+1}} \leq (1 - \pow) \kldivergence{\target}{\mu_{t}} + \lambda_t \log \int f_t ^\pow \rmd \mu_{t}  + (1 - \lambda_t) \log  \int f_t ^\pow \rmd \mu_{t} \transitionker\eqsp.
\end{equation}
Hence, any sequence $\{\lambda _t\}_{t\in \N}\,$ such that, for all $t\in\N$
\begin{equation}
\label{eq:cond-lambda}
\lambda_t \log \int f_t ^\pow \rmd \mu_{t}  + (1 - \lambda_t) \log  \int f_t ^\pow \rmd \mu_{t} \transitionker \leq 0
\end{equation}
ensures the contraction. Let $0 < \beta \leq 1$ and consider the sequence 
\begin{equation}
\label{eq:lambda_def}
\lambda_t = \begin{cases} 
\dfrac{\log \int f^\pow _t \rmd \mu_{t}\transitionker }{\log \int f^\pow _t \rmd \mu_{t}\transitionker   -\log \int f^\pow _t \rmd \mu_{t}} & \quad \mathrm{if} \quad \log \int f^\pow _t \rmd \mu_{t}\transitionker > 0,\\
\beta & \quad \mathrm{otherwise}.
\end{cases}
\end{equation} 
This sequence takes values between 0 and 1. Indeed, Jensen's inequality yields that $\int f^\pow _t \rmd \mu_{t} \leq 1$, with equality solely if $\mu_t = \target$ or $\pow = 1$. Therefore, $\lambda_t$ is also less or equal to 1 when $\log \int f^\pow _t \rmd \mu_{t}\transitionker > 0$.
The results thus follows since the sequence \eqref{eq:lambda_def} satisfies condition \eqref{eq:cond-lambda}.

\subsection{Intermediate Technical Result}

Before proving Theorem \ref{thm:tvcontraction}, we need an intermediate result.

\begin{proposition}
\label{prop:klstationnary}
Assume Conditions \ref{assp:LSI}, \ref{assp:smoothness}, and \ref{assp:drift} and $0 < \gamma \leq \big(2\lsconst L^2 \big)^{-1}$. Then $\ulaker$ has a unique stationary distribution  $\target_\gamma$ that satisfies
\begin{equation*} 
\kldivergence{\target_\gamma}{\target} \leq 4 \gamma dL^2 \lsconst \eqsp.
\end{equation*}
\end{proposition}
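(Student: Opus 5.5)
Existence and uniqueness of $\target_\gamma$ come first. By Condition~\ref{assp:smoothness}, the minorization \eqref{eq:smallset} holds on every compact set, so $\ulaker$ is $\mathrm{Leb}$-irreducible and strongly aperiodic, and every compact set is small, hence petite. Condition~\ref{assp:drift} then gives the drift inequality \eqref{eq:drift} with a petite set $C$. Theorem 14.0.1 of \citet{meyn2012markov} therefore yields positive recurrence and a unique invariant probability $\target_\gamma$. I would also record that $\target_\gamma$ has a density and that $\mu\ulaker \to \target_\gamma$ in total variation for any initial $\mu$, for instance $\mu = \target$. This convergence will let me pass a KL bound to the limit.

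For the quantitative bound I would follow the one-step interpolation argument of \citet{vempala2019rapid}. Let $\nu_k = \target \ulaker^k$, starting from $\nu_0 = \target$. On $[0,\gamma]$, one ULA step is the law of
\[
X_s = X_0 + s\nabla\log\target(X_0) + \sqrt{2}\,B_s .
\]
Its marginal density $\rho_s$ solves a Fokker--Planck equation with drift $\mathbb{E}[\nabla \log\target(X_0)\mid X_s=x]$. Differentiating $\kldivergence{\rho_s}{\target}$ gives
\[
\frac{\rmd}{\rmd s}\kldivergence{\rho_s}{\target} \le -\tfrac34 I(\rho_s\,|\,\target) + \mathbb{E}\|\nabla\log\target(X_s)-\nabla\log\target(X_0)\|^2 ,
\]
where $I$ is the relative Fisher information. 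By Condition~\ref{assp:smoothness}, the error term is at most $L^2\mathbb{E}\|X_s-X_0\|^2$. This can be bounded by a constant times $L^2(s^2\,\mathbb{E}\|\nabla\log\target(X_0)\|^2 + s d)$, and the first piece is in turn controlled by $I(\rho_s|\target)$ plus $dL$, using $\mathbb{E}_\target\|\nabla\log\target\|^2 \le dL$.

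Condition~\ref{assp:LSI}, in the normalization of the paper, gives $\kldivergence{\rho}{\target} \le (\lsconst/4)\,I(\rho|\target)$. Together with $\gamma \le (2\lsconst L^2)^{-1}$, which makes $\gamma L \le 1$ in the regime of interest, integrating over $[0,\gamma]$ should give the recursion
\[
\kldivergence{\nu_{k+1}}{\target} \le e^{-c\gamma/\lsconst}\kldivergence{\nu_k}{\target} + c'\gamma^2 d L^2 .
\]
Iterating yields
\[
\sup_k \kldivergence{\nu_k}{\target} \le \frac{c'\gamma^2 dL^2}{1-e^{-c\gamma/\lsconst}} \lesssim \gamma d L^2 \lsconst .
\]
I would track constants so that the bound is exactly $4\gamma dL^2\lsconst$. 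Since $\nu_k\to\target_\gamma$ and KL is lower semicontinuous under weak convergence, the same bound holds for $\kldivergence{\target_\gamma}{\target}$.

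An alternative is to invoke directly Theorem 1 of \citet{vempala2019rapid}. After converting their LSI constant $\alpha$ via $\lsconst = 2/\alpha$, and noting that their step-size condition $\gamma\le \alpha/(4L^2)$ is exactly $\gamma\le(2\lsconst L^2)^{-1}$, it gives $\kldivergence{\nu_k}{\target}\le e^{-\alpha\gamma k}\kldivergence{\nu_0}{\target} + 8\gamma dL^2/\alpha$. With $\nu_0=\target$ the second term is $4\gamma dL^2\lsconst$, and the limit argument above finishes the proof.

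The main obstacle is bookkeeping of constants across LSI conventions, since the paper's form has no factor $2$. The limit step also needs care: I need total-variation or weak convergence of $\target\ulaker^k$ to $\target_\gamma$, which follows from the ergodic theorem in \citet{meyn2012markov} for Harris positive aperiodic chains, and then lower semicontinuity of KL in its first argument.
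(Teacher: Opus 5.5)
Your proposal is correct and follows the paper's proof. Both use Meyn--Tweedie (Theorem 14.0.1) for existence, uniqueness and total-variation convergence of $\nu\ulaker^\ell$ to $\target_\gamma$, then Theorem 1 of Vempala--Wibisono with $\lsconst = 2/\alpha$ (your step-size and bias conversions are right) for the uniform bound $4\gamma dL^2\lsconst$, and finally lower semicontinuity of $\kldivergence{\cdot}{\target}$ to pass to the limit. Starting from $\nu_0=\target$ rather than a generic $\nu$ with finite divergence is only a cosmetic simplification; your rate $e^{-\alpha\gamma k}$ is the correct one, whereas the paper's $\exp(-\lsconst\gamma\ell/2)$ is a harmless misconversion that plays no role in the limit.
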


\begin{proof}
As aforementioned, \eqref{eq:smallset} implies that any compact set is small for $\ulaker$ and the Lebesgue measure is an irreducibility measure.
Together with Condition \ref{assp:drift} and Theorem 14.0.1 from \cite{meyn2012markov}, this implies that for $\target_\gamma$-almost-everywhere $x \in \R^d$, 
\begin{equation*} 
\lim_{\ell \to \infty }\tvnorm{\ulaker^\ell(x, \cdot) - \target_\gamma} = 0\eqsp.
\end{equation*}
Furthermore, for all $x \in \R^d$ and $A \in \mathcal{B}(\R^d)$ such that $\mathrm{Leb}(A) > 0$, we have $\ulaker(x, A) > 0$; 
so $\target_\gamma(A) = \target_\gamma \ulaker(A) > 0$ and so $\target_\gamma \gg \mathrm{Leb}$. Finally, for any $\nu \in \probmeasures$ 
\begin{equation*} 
\tvnorm{\nu \ulaker^\ell - \target_\gamma} \leq \int \nu(\rmd x) \tvnorm{\ulaker^\ell(x, \cdot) - \target_\gamma}\eqsp,
\end{equation*}
and by the Lebesgue dominated convergence theorem it follows that
\begin{equation}
\label{eq:weaklimit}
 \lim_{\ell \to \infty} \tvnorm{\nu \ulaker^\ell - \target_\gamma} = 0\eqsp.
\end{equation}
In \cite{vempala2019rapid} (Theorem 1), it is shown that for all $\ell > 0$ 
\begin{equation*} 
\kldivergence{\nu \ulaker^\ell}{\target} \leq \exp(-\lsconst \gamma \ell / 2) \kldivergence{\nu}{\target} + 4 \gamma d L^2 \lsconst\eqsp.
\end{equation*}
Thus, the lower semi-continuity of the Kullback-Leibler for the weak topology \citep[][Theorem 19]{van2014renyi} and the convergence in total variation distance of \eqref{eq:weaklimit} imply that
\begin{equation*} 
\kldivergence{\target_\gamma}{\target} \leq \liminf_{\ell \to \infty} \kldivergence{\nu \ulaker^\ell}{\target} \leq 4 \gamma d L^2 \lsconst 
\end{equation*}
where $\nu$ is such that $\kldivergence{\nu}{\target} < \infty$.
\end{proof}

\subsection{Proof of Theorem \ref{thm:tvcontraction}}
\label{proof:tvcontraction}

Similarly to the sequence \eqref{eq:lambda_def} in the proof of Proposition~\ref{prop:klcontraction}, define 
\begin{equation}
\label{eq:lambda:optimal}
\lambda^\star _t = \begin{cases}
\dfrac{\log \int f^\pow _t \rmd \mu_{t}\transitionker }{\log \int f^\pow _t \rmd \mu_{t}\transitionker   -\log \int f^\pow _t \rmd \mu_{t}} & \quad \mathrm{if} \quad \log \int f^\pow _t \rmd \mu_{t}\transitionker > 0,\\
\beta_t & \quad \mathrm{otherwise}.
\end{cases}
\end{equation} 
Recall that $f_t$ is the Radon-Nikodym derivative of $\target$ with respect to $\ulaiterate{t}$. The convexity of the Kullback-Leibler divergence and the definition of $\lambda^\star _t$ in \eqref{eq:lambda:optimal} yield
\begin{align*} 
\kldivergence{\target_\gamma}{\ulaiterate{t+1}} & \leq \lambda^\star _{t} \int \log \frac{\rmd \target_\gamma}{\rmd f^\pow _t \ulaiterate{t}} \rmd \target_\gamma + (1 - \lambda^\star _{t}) \int \log \frac{\rmd \target_\gamma}{\rmd f^\pow _t \ulaiterate{t} \ulaker} \rmd \target_\gamma\\
& \hspace{1cm} + \lambda^\star _t \log \int f_t ^\pow \rmd \ulaiterate{t}  + (1 - \lambda^\star _t) \log  \int f_t ^\pow \rmd \ulaiterate{t} \ulaker  \\
& \leq \lambda^\star _{t} \int \log \frac{\rmd \target_\gamma}{\rmd f^\pow _t \ulaiterate{t}} \rmd \target_\gamma + (1 - \lambda^\star _{t}) \int \log \frac{\rmd \target_\gamma}{\rmd f^\pow _t \ulaiterate{t} \ulaker} \rmd \target_\gamma\eqsp.
\end{align*}
Next, 
\begin{align*} 
 \int \log \frac{\rmd \target_\gamma}{\rmd f^\pow _t \ulaiterate{t}}  \rmd \target_\gamma & = (1 - \pow)\kldivergence{\target_\gamma}{\ulaiterate{t}} + \pow \kldivergence{\target_\gamma}{\target}\eqsp, \\
\int \log \frac{\rmd \target_\gamma}{\rmd f^\pow _t \ulaiterate{t} \ulaker} \rmd \target_\gamma & \leq (1 - \pow) \kldivergence{\target_\gamma}{\ulaiterate{t}} + \pow \kldivergence{\target_\gamma}{\target}\eqsp,
\end{align*}
where the second inequality follows using the Data Processing inequality \citep{van2014renyi} and the fact that $\ulaker$ is $\target_\gamma$-invariant. Thus, we have
\begin{equation*}\kldivergence{\target_\gamma}{\ulaiterate{t+1}}  \leq (1 - \pow) \kldivergence{\target_\gamma}{\ulaiterate{t}} + \pow \kldivergence{\target_\gamma}{\target} \leq (1 - \pow)^t \kldivergence{\target_\gamma}{\ulaiterate{0}} + \kldivergence{\target_\gamma}{\target} 
\end{equation*}
and our upper bound is deduced from the Pinsker's inequality and Proposition \ref{prop:klstationnary},
\begin{align*} 
\tvnorm{\target - \ulaiterate{t}} & \leq \tvnorm{\target - \target_\gamma} + \tvnorm{\target_\gamma - \ulaiterate{t}} \\
& \leq (1 - \pow)^{t/2} \sqrt{2 \kldivergence{\target_\gamma}{\ulaiterate{0}}} + 2 \sqrt{2\kldivergence{\target_\gamma}{\target}}\eqsp.
\end{align*}
The conclusion follows using Proposition \ref{prop:klstationnary}.

\newpage
\section{Additional Elements on the Gaussian Mixture Benchmarks}
\label{app:gm-exp-details}

This section reports additional results and implementation details for the Gaussian mixture benchmarks of Section~\ref{sec:gmm-benchmarks}.

\subsection{Additional Sliced Wasserstein Distances}
\label{app:gm-sw-additional}

Figure~\ref{fig:sw_appendix} illustrates the evolution of the sliced Wasserstein distance $\mathrm{SW}_2(\tilde{\mu}_t, \pi)$ across
EM2C iterations $t$, and Table~\ref{tab:sw_appendix} reports final sliced Wasserstein distances
$\mathrm{SW}_2(\tilde{\mu}_T, \pi)$ for the various target distributions.
Across all configurations, they confirm the trends observed in the main paper: 
smaller values of the mixing parameter $\lambda$ lead to improved approximation of the target distribution.

\begin{figure}[h!]
\centering
\begin{subfigure}{0.48\linewidth}
\centering
\includegraphics[width=\linewidth]{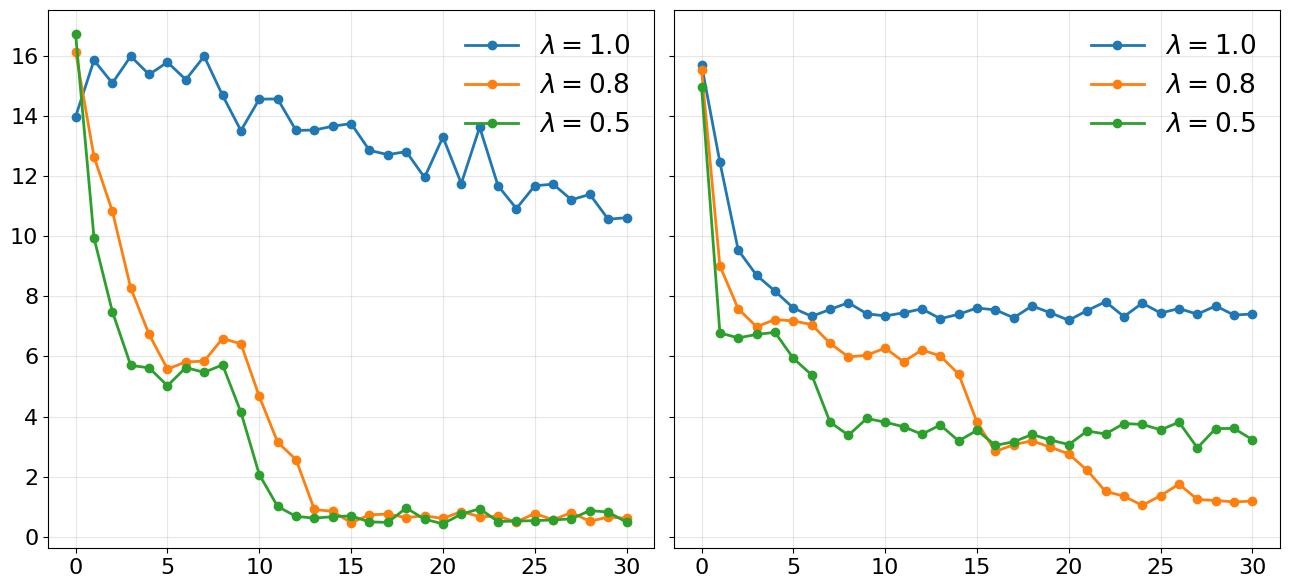}
\caption{GM2, $d=10$}
\end{subfigure}
\hfill
\begin{subfigure}{0.48\linewidth}
\centering
\includegraphics[width=\linewidth]{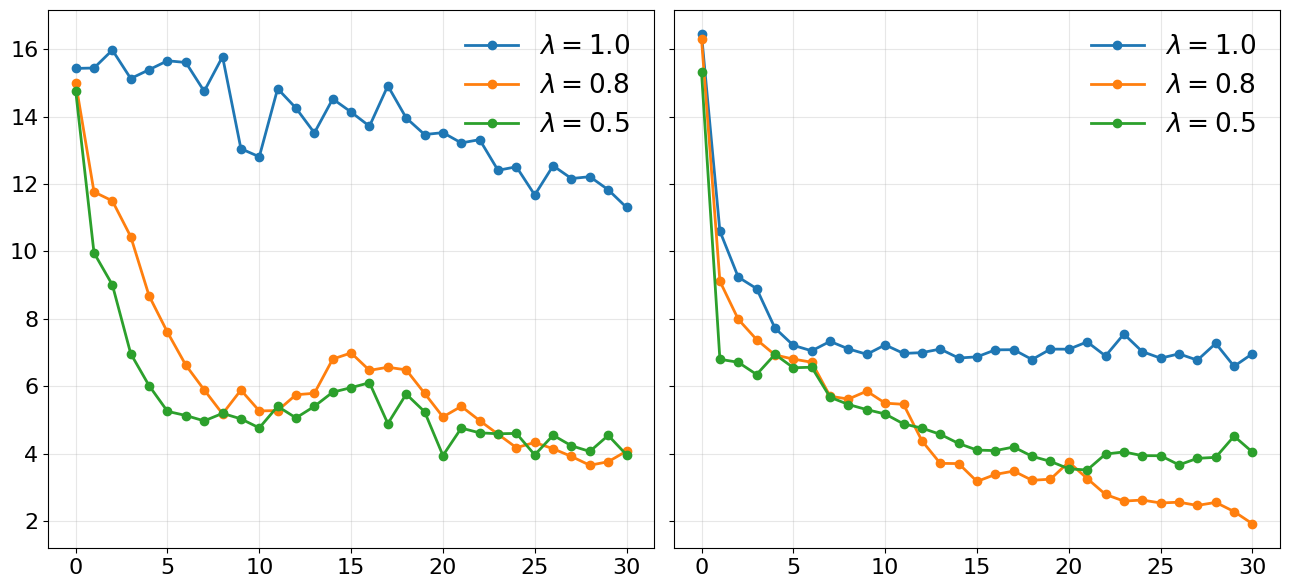}
\caption{GM2, $d=20$}
\end{subfigure}

\vspace{0.6em}

\begin{subfigure}{0.48\linewidth}
\centering
\includegraphics[width=\linewidth]{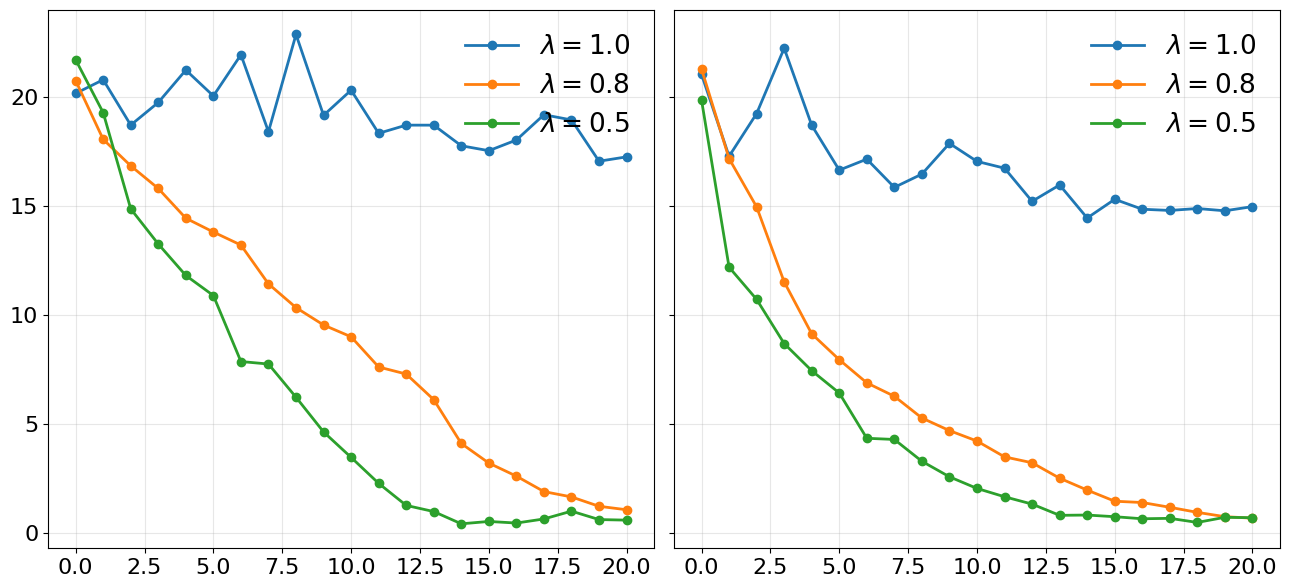}
\caption{GM25, $d=10$}
\end{subfigure}
\hfill
\begin{subfigure}{0.48\linewidth}
\centering
\includegraphics[width=\linewidth]{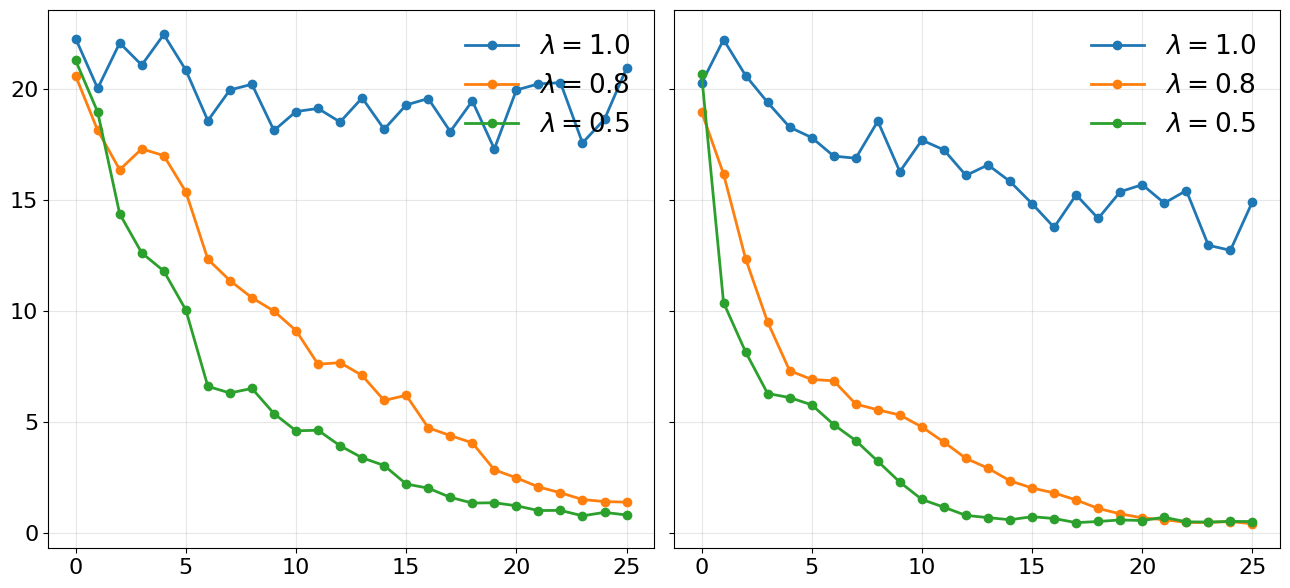}
\caption{GM25, $d=20$}
\end{subfigure}

\caption{
Evolution of the sliced Wasserstein distance $\mathrm{SW}_2(\tilde{\mu}_t,\pi)$ over EM2C
iterations for additional Gaussian mixture benchmarks.
Results are shown for Random Walk (left) and Unadjusted Langevin (right)
exploration kernels, and for $\lambda \in \{0.5, 0.8, 1.0\}$.
}
\label{fig:sw_appendix}
\end{figure}

\begin{table}[h!]
\centering
\caption{
Final sliced Wasserstein distances of EM2C for Gaussian mixture benchmarks GM2, GM4, and GM25 at various ambient dimensions using Random Walk (RW) or Unadjusted Langevin Algorithm (ULA) exploration kernels.
}
\label{tab:sw_appendix}

\setlength{\tabcolsep}{3pt}
\renewcommand{\arraystretch}{1.08}

\begin{tabularx}{\columnwidth}{cc *{6}{>{\centering\arraybackslash}X}}
\toprule
& & \multicolumn{2}{c}{GM2} & \multicolumn{2}{c}{GM4} & \multicolumn{2}{c}{GM25} \\
\cmidrule(lr){3-4}
\cmidrule(lr){5-6}
\cmidrule(lr){7-8}
$d$ & $\lambda$ & RW & ULA & RW & ULA & RW & ULA \\
\midrule
4  & 1.0 & 7.66  & 7.44  & 17.98 & 18.14 & 12.10 & 11.43 \\
4  & 0.8 & 7.37  & 2.31  & --    & --    & 2.63  & 0.60  \\
4  & 0.5 & 4.84  & 4.30  & --    & --    & 0.88  & 0.36  \\
\midrule
10 & 1.0 & 7.31  & 7.20  & 18.71 & 16.39 & -- & -- \\
10 & 0.8 & 0.62  & 1.19  & 7.10  & 0.63  & 1.08  & 0.70  \\
10 & 0.5 & 0.49  & 3.23  & 0.79  & 0.90  & 0.61  & 0.72  \\
\midrule
20 & 1.0 & 6.84 & 7.59  & -- & -- & -- & -- \\
20 & 0.8 & 3.83  & 1.59  & 6.35  & 5.74  & 1.39  & 0.42  \\
20 & 0.5 & 3.19  & 4.01  & 1.64  & 3.95  & 0.81  & 0.52  \\
\bottomrule
\end{tabularx}
\end{table}

\FloatBarrier

\subsection{Additional Qualitative Evolution Plots}
\label{app:gm-qualitative}
Figures~\ref{fig:gm2_d10_evolution}--\ref{fig:gm25_d20_evolution} complement Figure~\ref{fig:gm4_evolution}. 
They illustrate
the effect of the mixing parameter $\lambda$ and the exploration
kernel on the ability of EM2C to recover and maintain multimodal structure.

\begin{figure}[h!]
\centering 
\begin{subfigure}{.49\linewidth} 
\centering 
\includegraphics[width=\linewidth]{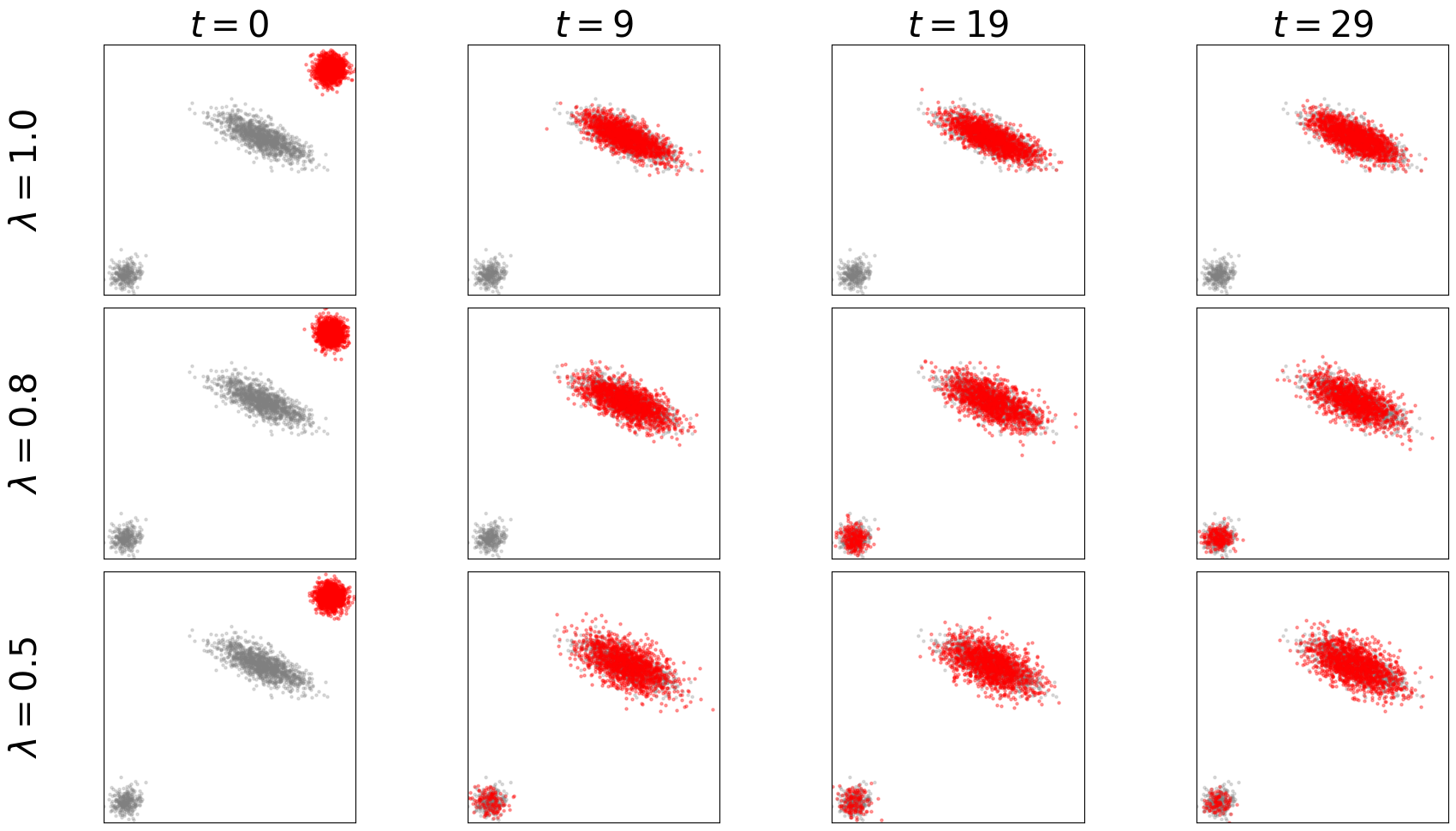} 
\caption{Unadjusted Langevin exploration kernel} 
\end{subfigure} 
\vspace{0.5em} 
\begin{subfigure}{.49\linewidth} 
\centering \includegraphics[width=\linewidth]{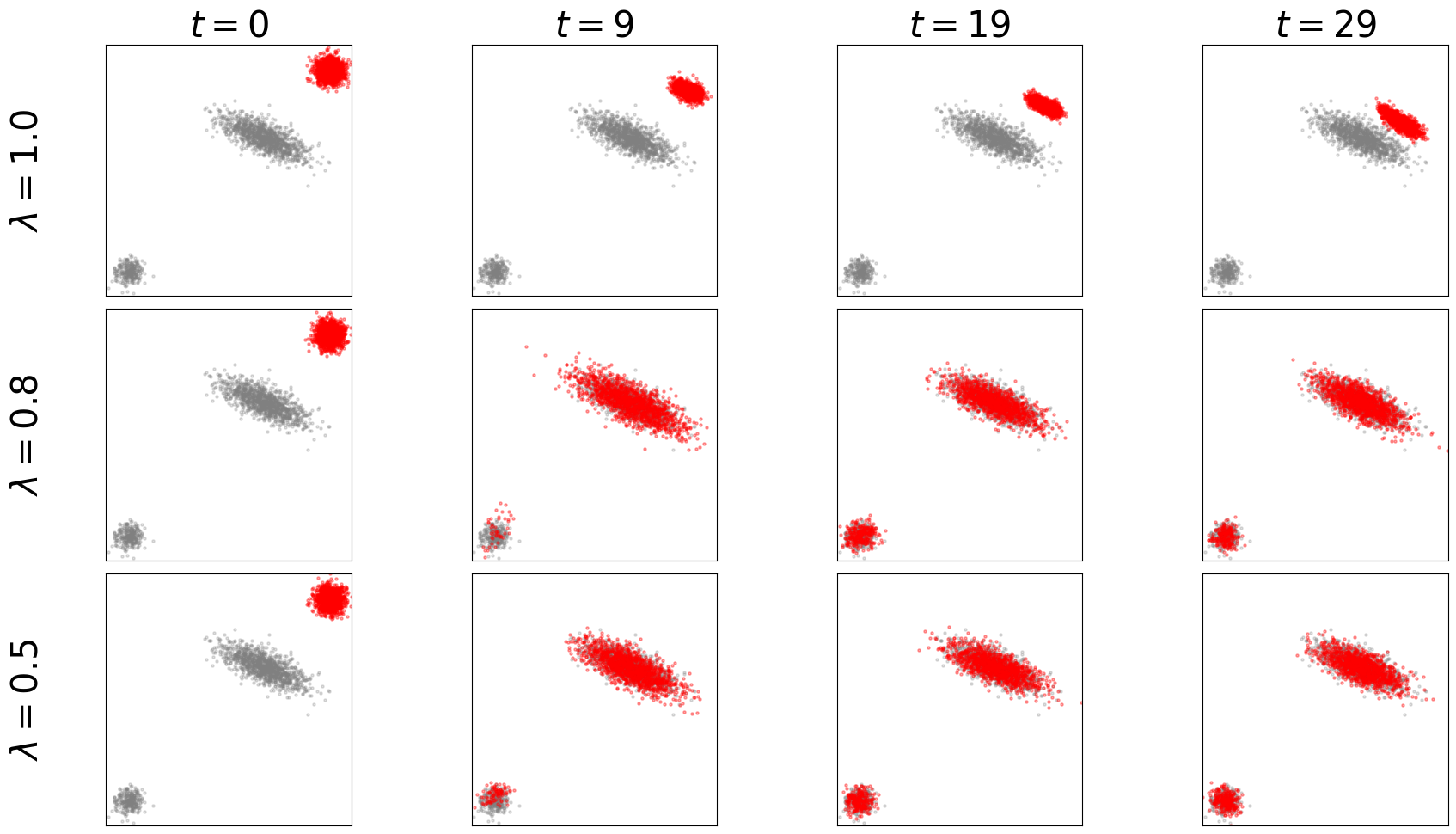} \caption{Random Walk Metropolis--Hastings exploration kernel} 
\end{subfigure} 
\caption{
Evolution of EM2C proposals for the GM2 target ($d=10$).
Rows correspond to $\lambda \in \{0.5, 0.8, 1.0\}$ and columns to EM2C iterations $t$. 
Gray and red points are samples from the target distribution $\pi$ and the learned proposal $\tilde{\mu}_t$, respectively.
The marginal samples on the first two coordinates are displayed.
}
\label{fig:gm2_d10_evolution}
\end{figure}

\begin{figure}[h!]
\centering 
\begin{subfigure}{.49\linewidth} 
\centering 
\includegraphics[width=\linewidth]{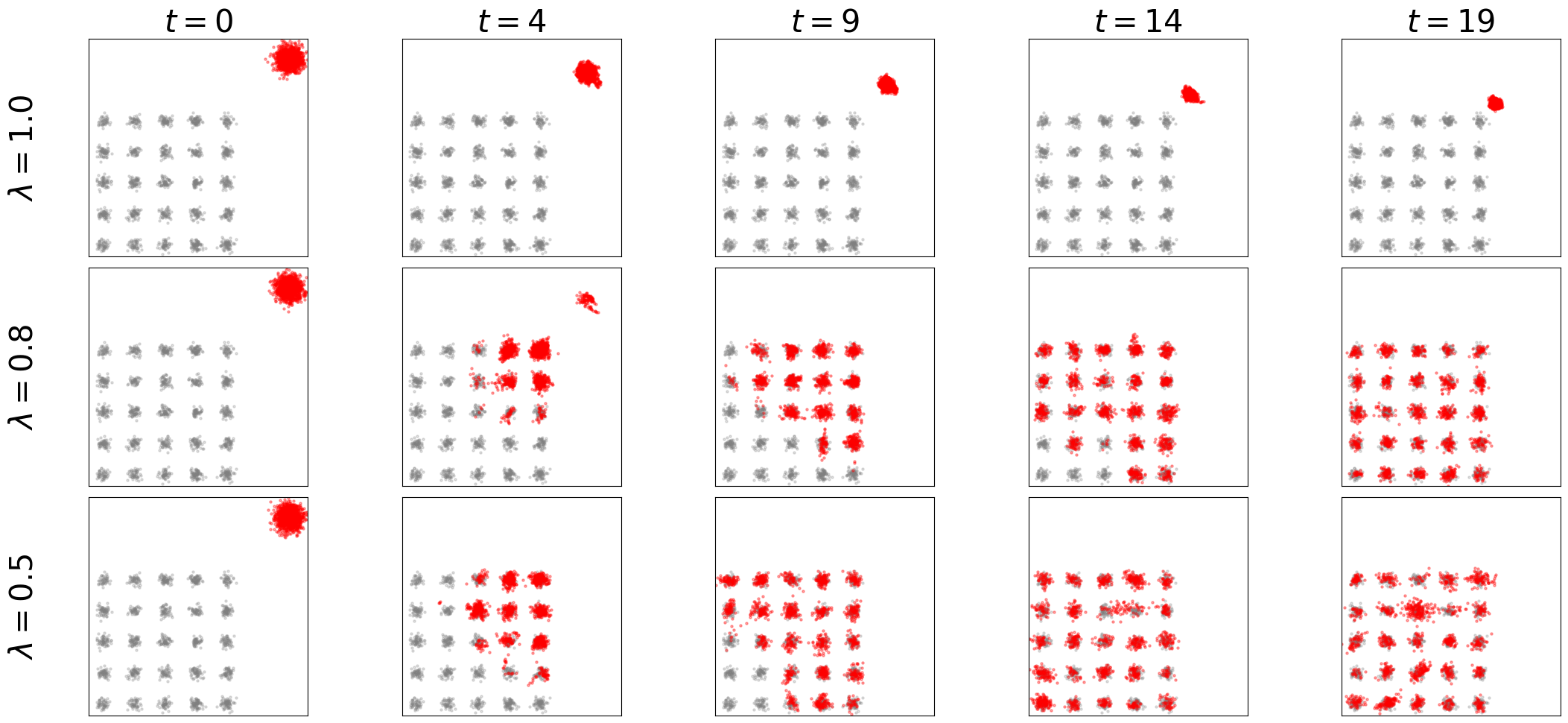} 
\caption{Unadjusted Langevin exploration kernel} 
\end{subfigure} 
\vspace{0.5em} 
\begin{subfigure}{.49\linewidth} 
\centering \includegraphics[width=\linewidth]{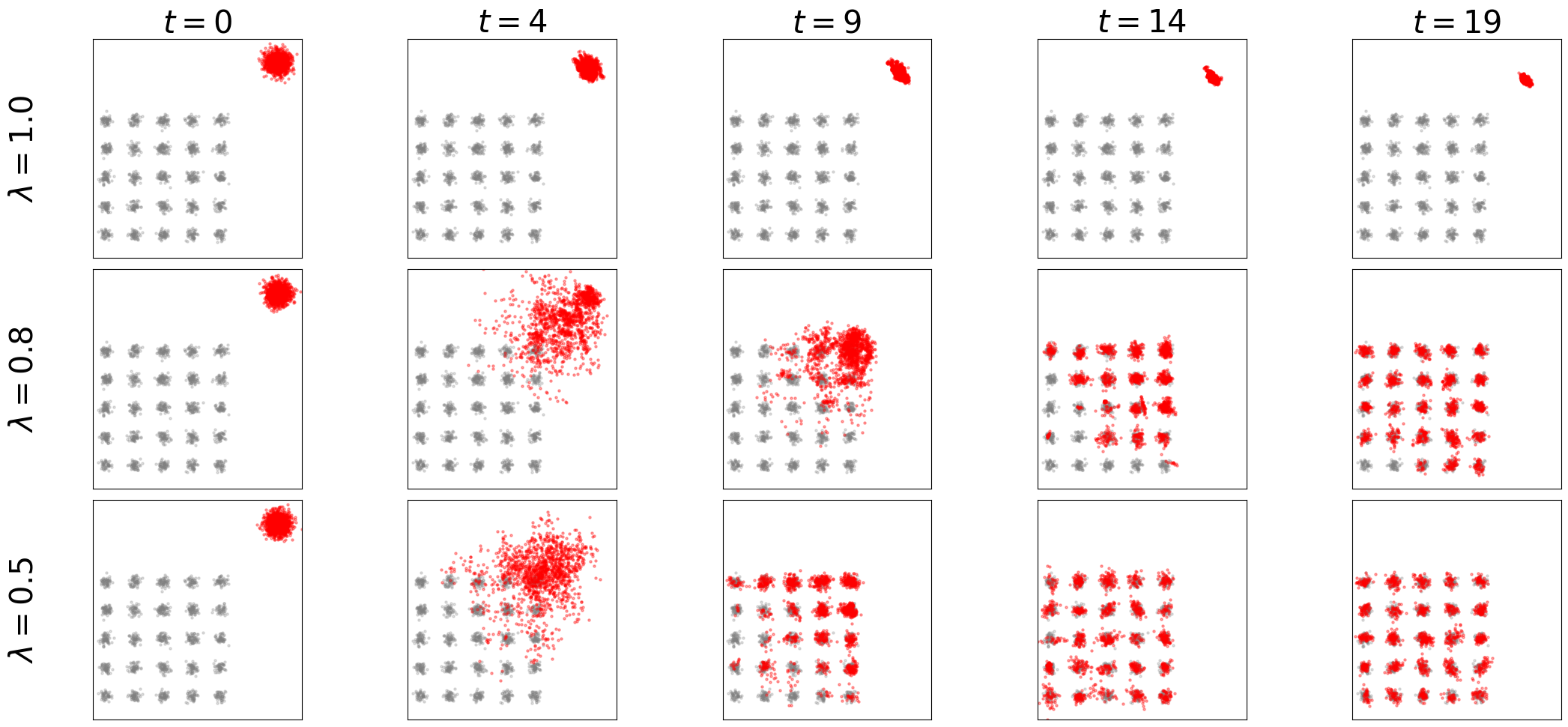} \caption{Random Walk Metropolis-Hastings exploration kernel} 
\end{subfigure} 
\caption{
Evolution of EM2C proposals for the GM25 target ($d=10$).
Rows correspond to $\lambda \in \{0.5, 0.8, 1.0\}$ and columns to
EM2C iterations $t$. 
Gray and red points are samples from the target distribution $\pi$ and the learned proposal $\tilde{\mu}_t$, respectively.
The marginal samples on the first two coordinates are displayed.
}
\label{fig:gm25_d10_evolution}
\end{figure}

\begin{figure}[h!]
\centering 
\begin{subfigure}{.49\linewidth} 
\centering 
\includegraphics[width=\linewidth]{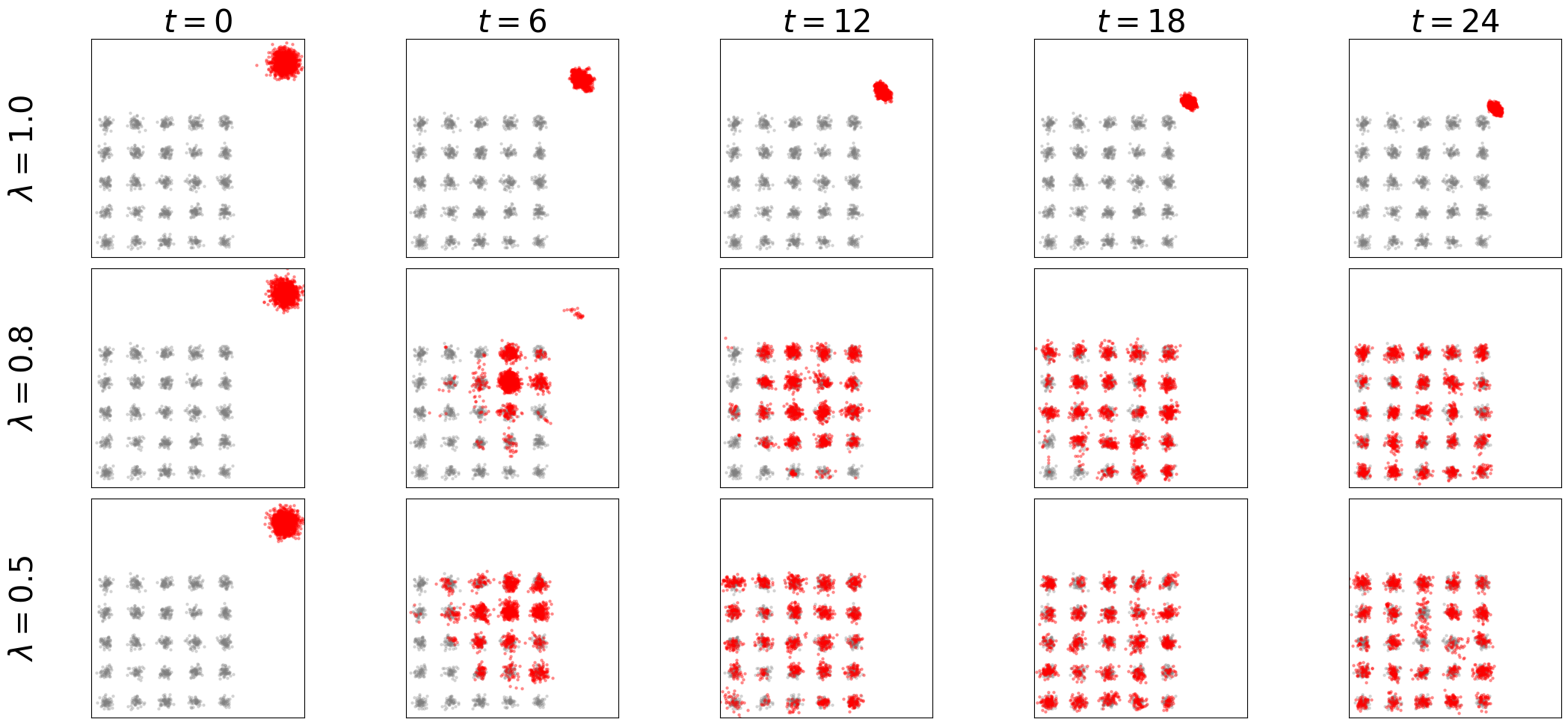} 
\caption{Unadjusted Langevin exploration kernel}
\end{subfigure} 
\vspace{0.5em} 
\begin{subfigure}{.49\linewidth} 
\centering \includegraphics[width=\linewidth]{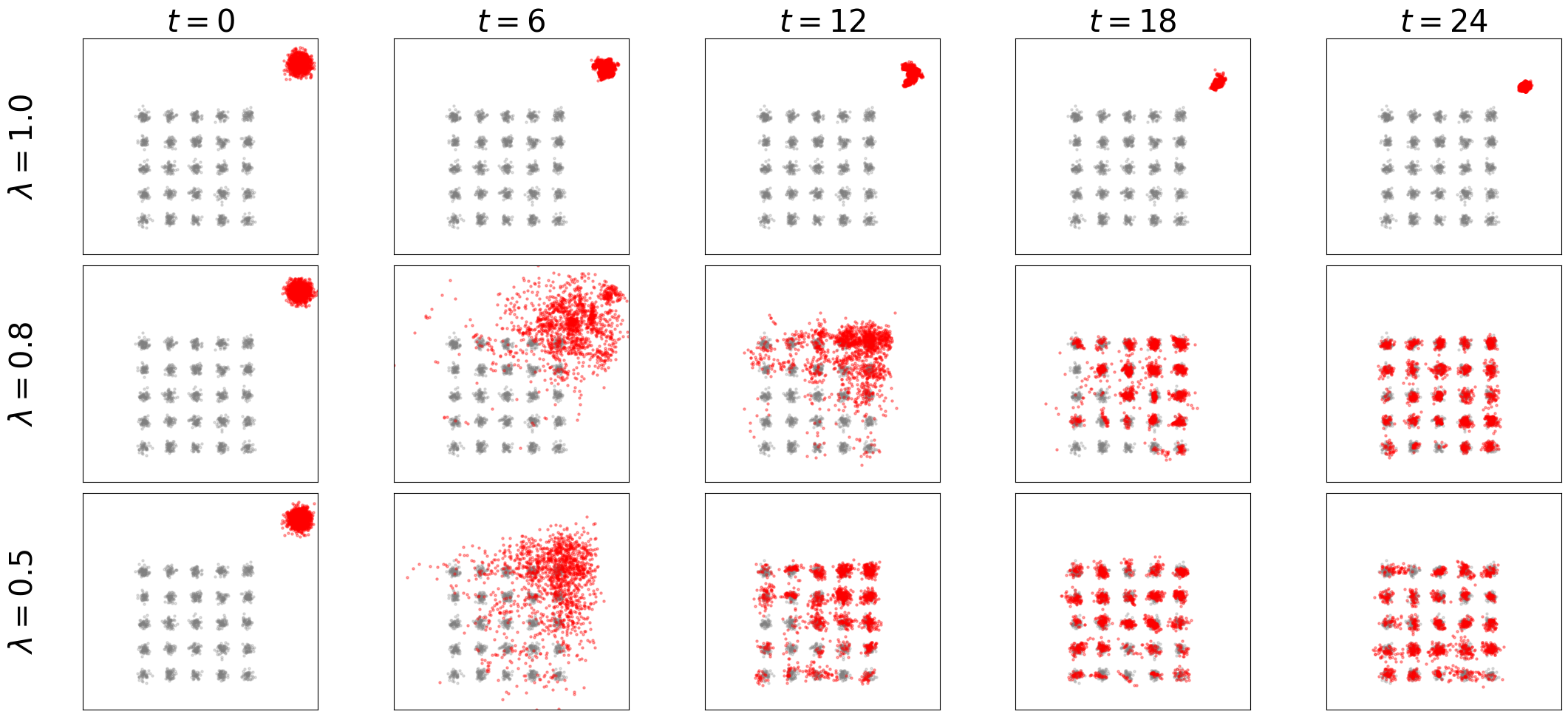} \caption{Random Walk Metropolis--Hastings exploration kernel} 
\end{subfigure} 
\caption{
Evolution of EM2C proposals for the GM25 target ($d=20$).
Rows correspond to $\lambda \in \{0.5, 0.8, 1.0\}$ and columns to
EM2C iterations $t$. 
Gray and red points are samples from the target distribution $\pi$ and the learned proposal $\tilde{\mu}_t$, respectively.
The marginal samples on the first two coordinates are displayed.
}
\label{fig:gm25_d20_evolution}
\end{figure}

\FloatBarrier

\subsection{Comparison with Markov Chain Monte Carlo}
\label{app:mcmc-comparison}

Figure~\ref{fig:gm4_mcmc_evolution} shows a representative run of ULA on GM4 in dimension $d=10$, providing a direct qualitative comparison with the EM2C proposal evolution shown in Figure~\ref{fig:gm4_evolution}.
This ULA baseline is run under the matched computational budget described in Appendix~\ref{sec:cost}, using the hyperparameter $\gamma_K = 2.0$. 
It corresponds to the limiting case $\lambda=0$, where only the Markov kernel is used, and is preferred to RW because it performed better on this benchmark.
In contrast to EM2C, which maintains a population of particles, the single-chain MCMC sampler exhibits slower mode exploration and delayed coverage of the target distribution.

\begin{figure}[h!]
\centering
\includegraphics[width=\linewidth]{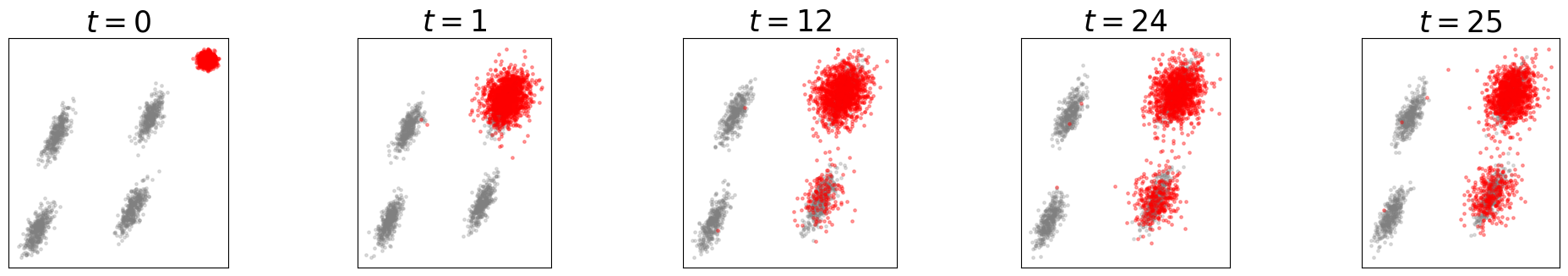}
\caption{
Evolution of the
Unadjusted Langevin Algorithm (ULA) for the GM4 target distribution ($d=10$).
Columns corresponds to ULA iterations $t$.
Gray points denote samples from the target distribution
$\pi$, and red points
correspond to the current state of the Markov chain.
The marginal samples on the first two coordinates are displayed.
}
\label{fig:gm4_mcmc_evolution}

\end{figure}

\subsection{Experimental Design}
\label{app:gm-configs}

\paragraph{Exploration kernel $K$.} Table~\ref{tab:gm_configs_all} reports the kernel parameters (RW: step size $\sigma_K$, ULA: step size $\gamma_K$), number of kernel steps $n_K$ applied at each EM2C iteration, and number of EM2C iterations $T$ used in the Gaussian mixture benchmarks. These configurations are adapted to the scale and geometry of each targets and were selected to ensure stable exploration across dimensions.

\begin{table}[!h]
\centering
\caption{
Exploration kernel hyperparameters and number of EM2C iterations for the Gaussian mixture benchmarks.
}
\label{tab:gm_configs_all}

\setlength{\tabcolsep}{3pt}
\renewcommand{\arraystretch}{1.08}

\begin{tabularx}{\columnwidth}{ll *{3}{>{\centering\arraybackslash}X >{\centering\arraybackslash}X >{\centering\arraybackslash}X}}
\toprule
& & \multicolumn{3}{c}{GM2} & \multicolumn{3}{c}{GM4} & \multicolumn{3}{c}{GM25} \\
\cmidrule(lr){3-5}
\cmidrule(lr){6-8}
\cmidrule(lr){9-11}
$d$ & Kernel
& $\sigma_K/\gamma_K$ & $n_K$ & $T$
& $\sigma_K/\gamma_K$ & $n_K$ & $T$
& $\sigma_K/\gamma_K$ & $n_K$ & $T$ \\
\midrule
$2,4$ & RW
& $6.0$ & 20 & 25
& $4.5$ & 15 & 25
& $1.5$ & 10 & 15 \\

$10$ & RW
& $8.0$ & 20 & 30
& $4.5$ & 15 & 25
& $2.0$ & 15 & 20 \\

$20$ & RW
& $7.0$ & 20 & 30
& $5.0$ & 15 & 25
& $2.5$ & 20 & 25 \\

\midrule

$2,4$ & ULA
& $2.3$ & 15 & 25
& $2.0$ & 10 & 25
& $0.3$ & 10 & 15 \\

$10$ & ULA
& $2.3$ & 15 & 30
& $2.0$ & 10 & 25
& $0.3$ & 10 & 20 \\

$20$ & ULA
& $2.3$ & 15 & 30
& $2.0$ & 10 & 25
& $0.35$ & 10 & 25 \\

\bottomrule
\end{tabularx}
\end{table}

\FloatBarrier

\paragraph{Variational family.}
For Gaussian mixture benchmarks, $\tilde\mu_t$ belongs to a tensorized Gaussian mixture family. 
Writing $x=(x^{(1)},\ldots,x^{(d/2)})$ with $x^{(j)}\in\mathbb R^2$,
\begin{equation*}
\mu_\theta(x)
=
\prod_{j=1}^{d/2}\mu_\theta^{(j)}(x^{(j)}),
\qquad
\mu_\theta^{(j)}(x)
=
\sum_{k=1}^{K_0}
w_k^{(j)}
\mathcal N(x\mid m_k^{(j)},\Sigma_k^{(j)})\eqsp.
\end{equation*}
The covariance matrices are full and $K_0$ matches the number of modes of the two-dimensional marginal: $K_0=2,4,25$ for GM2, GM4, and GM25, respectively. 
This tensorized construction implicitly defines a mixture with $K_0^{d/2}$ components in dimension $d$.

\paragraph{Projection via EM algorithm.}
At each EM2C iteration, the oracle update {eq:memd:iterates} is approximated by the empirical measure of particles $Z_t^{1:N}$.
For each block $j$, the two-dimensional GMM $\mu_\theta^{(j)}$ is fitted by maximum likelihood to $Z_{t,j}^{1:N}$ using EM.
We use full covariance matrices, k-means++ initialization, $n_{\mathrm{init}}=3$ random restarts, at most $500$ EM iterations, and covariance regularization $10^{-3}$.
All EM fits are performed independently across blocks and EM2C
iterations.

\subsection{High-Dimensional Gaussian Mixture}
\label{app:high_dim_gmm}

\paragraph{Target distribution.}
We consider a Gaussian mixture on $\mathbb{R}^d$, $d=200$, with $K = 4$ components,  designed to isolate ambient dimensionality from combinatorial multimodality. It follows high-dimensional sampling benchmarks from \citet{chen2024diffusive}.
The target distribution is defined as
\begin{equation}
\pi(x) = \sum_{k=1}^4 w_k \, \mathcal{N}(x \mid m_k, 0.05\, I_d), \quad x \in \mathbb{R}^d.
\end{equation}
where $w=(0.1,0.1,0.1,0.7)$ and $m_k=(m_k^{(2D)},0,\ldots,0)$ with
\begin{equation*}
m_k^{(2D)}\in\{(-1,1),(1,1),(-1,-1),(1,-1)\}.
\end{equation*}

\paragraph{Experimental setup.}
The initial proposal is $\mu_0=\mathcal N(-2\mathbf 1_d,0.1^2 I_d)$, which has little overlap with the target to create a difficult initialization regime and tests the robustness of the algorithm.
We run EM2C with $T=15$, $N=8,000$, $\varepsilon=0.8$, $\lambda=0.9$, and ULA kernel with parameters $\gamma_K=0.02$, $n_K=15$. 
At each iteration, the projection fits a Gaussian mixture with diagonal-covariance
\begin{equation*}
\mu_\theta(x)
=
\sum_{k=1}^{10}w_k\,\mathcal N(x\mid m_k,\Sigma_k)
\end{equation*}
by EM, using covariance regularization $10^{-4}$ and a maximum of $300$ EM iterations.
It is implemented using the scikit-learn version of EM with $3$ random initializations.

\paragraph{Discussion.}
Figure~\ref{fig:high_dim_gmm_all} shows the sliced Wasserstein trajectory (a), and and final samples projected onto the first two coordinates (b).
EM2C recovers the four-mode structure despite the high ambient dimension and challenging initialization.

\begin{figure}[h]
\centering

\begin{subfigure}{0.45\linewidth}
\centering
\includegraphics[height=6.5cm]{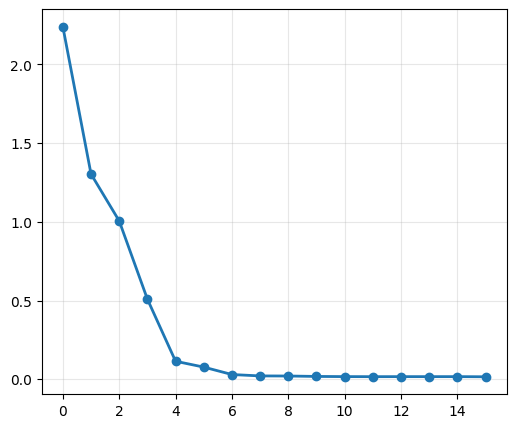}
\caption{}
\label{fig:high_dim_sw}
\end{subfigure}
\hfill
\begin{subfigure}{0.45\linewidth}
\centering
\includegraphics[height=6.5cm]{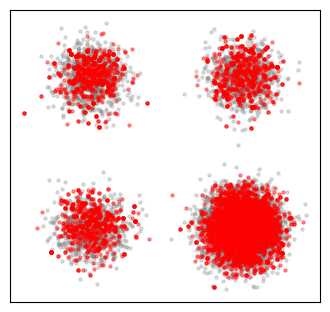}
\caption{}
\label{fig:high_dim_gmm}
\end{subfigure}

\caption{
(a) Evolution of the sliced Wasserstein distance $\mathrm{SW}_2(\tilde{\mu}_t,\pi)$ over EM2C iterations ($d=200$). 
(b) Final samples obtained by EM2C. Gray and red points are samples from the target distribution $\pi$ and the learned proposal $\tilde{\mu}_t$, respectively.
The marginal samples on the first two coordinates are displayed. 
}
\label{fig:high_dim_gmm_all}
\end{figure}

\FloatBarrier

\newpage
\section{Additional Elements on the Dual Moons and Two Rings Benchmarks}
\label{app:2d-exp-details}

This section gives additional results and implementation details for the 2-dimensional benchmarks of Section~\ref{sec:2d_benchmarks}.

\subsection{Additional Qualitative Evolution Plots}

Figures~\ref{fig:2d_moons_lambda}--\ref{fig:2d_rings_lambda} show EM2C proposal evolution for $\lambda\in\{0.8, 1.0\}$. 
For visualization puposes only, samples from $\tilde{\mu}_t$ are resampled with weights proportional to $\pi(x)/\tilde{\mu}_t(x)$, removing particles with negligible target contribution.

\begin{figure}[h!]
\centering

\begin{subfigure}{0.9\linewidth}
\centering
\includegraphics[width=\linewidth]{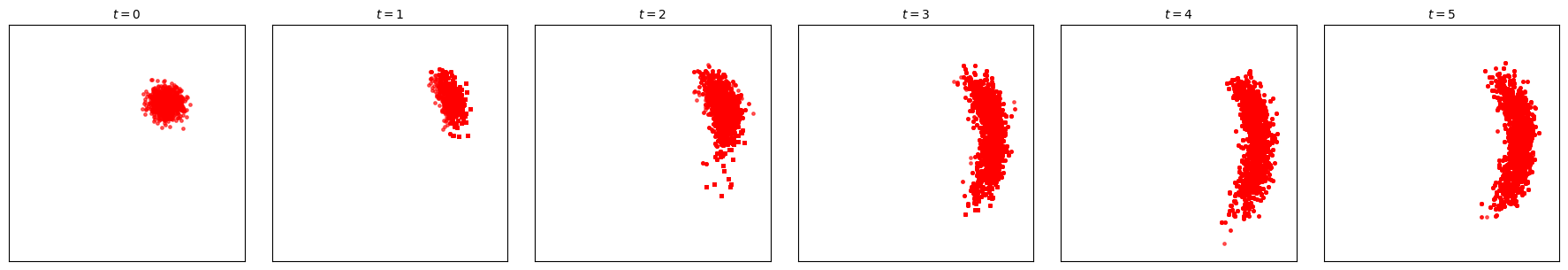}
\caption{$\lambda=1.0$}
\end{subfigure}

\vspace{0.5em}

\begin{subfigure}{0.9\linewidth}
\centering
\includegraphics[width=\linewidth]{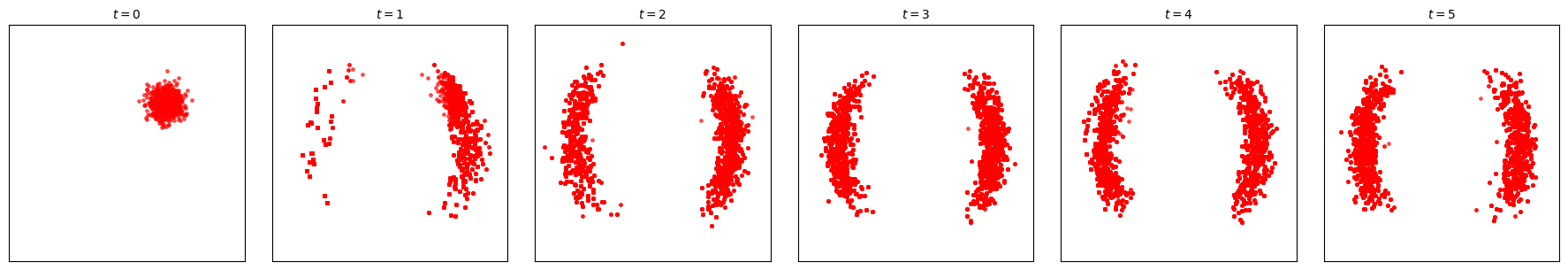}
\caption{$\lambda=0.8$}
\end{subfigure}

\caption{
Evolution of EM2C proposal distributions on the Dual Moons benchmark. Plotted samples are drawn from the intermediate proposals $\tilde{\mu}_t$ and resampled using
importance weights $\pi(x)/\tilde{\mu}_t(x)$.
}
\label{fig:2d_moons_lambda}
\end{figure}

\begin{figure}[h!]
\centering

\begin{subfigure}{0.9\linewidth}
\centering
\includegraphics[width=\linewidth]{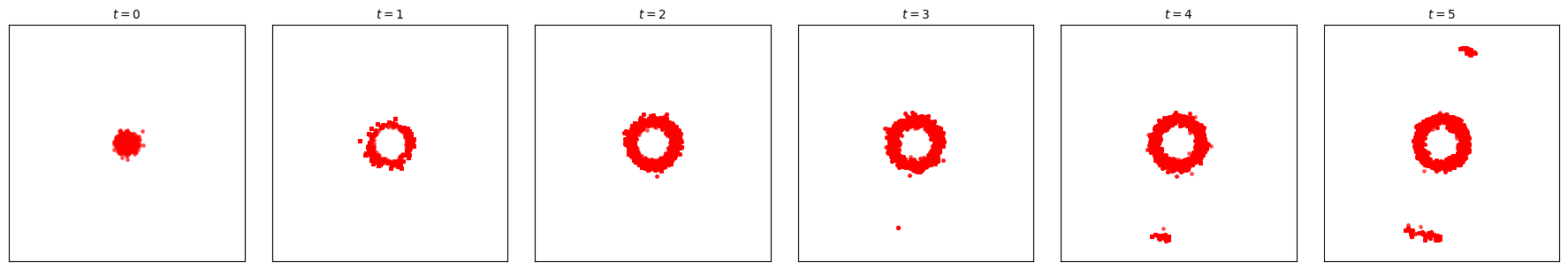}
\caption{$\lambda=1.0$}
\end{subfigure}

\vspace{0.5em}

\begin{subfigure}{0.9\linewidth}
\centering
\includegraphics[width=\linewidth]{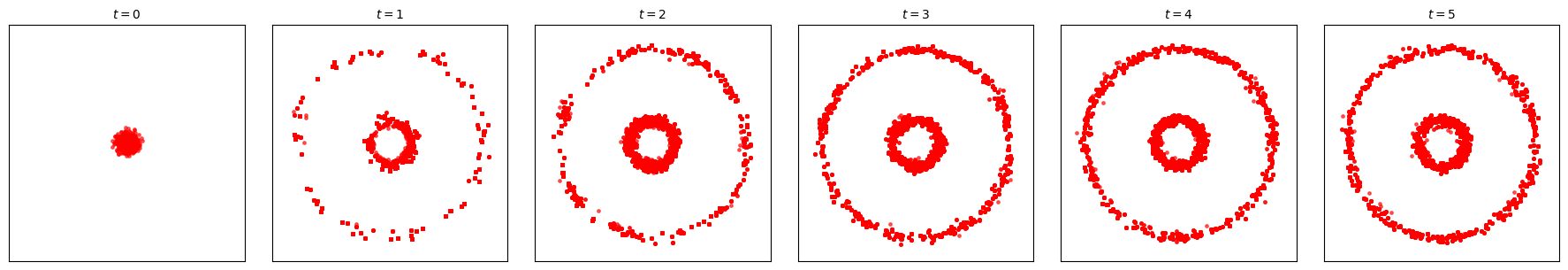}
\caption{$\lambda=0.8$}
\end{subfigure}

\caption{
Evolution of EM2C proposal distributions on the Two Rings benchmark.
Plotted samples are drawn from the intermediate proposals $\tilde{\mu}_t$ and resampled using
importance weights $\pi(x)/\tilde{\mu}_t(x)$.
}
\label{fig:2d_rings_lambda}
\end{figure}

\FloatBarrier

\subsection{Experimental Design}
\label{app:impl-2d}

We report here the initial distributions and the main hyperparameters associated with each algorithm used throughout experiments of Section \ref{sec:2d_benchmarks}.

\paragraph{Initial distributions} For all methods, the initial distribution is set to a Gaussian, $\mathcal N(m_0,\Sigma_0)$. 
For the Dual Moons benchmark, $m_0=(1,1)^\top$ and $\Sigma_0=0.04I_2$. For the Two Rings benchmark, $m_0=(0,0)^\top$ and $\Sigma_0=0.04I_2$.

\paragraph{Methods and hyperparameters}
All methods use the same initialization and $N=10{,}000$ particles.  
Hyperparameters are chosen to give comparable target-density and gradient evaluation budgets, of the order of $10^7$ evaluations. The cost model is given in Appendix~\ref{sec:cost}. 
Table~\ref{tab:2d_params_combined} summarizes the main parameters of the following methods.

\subsubsection{EM2C Configuration}
\label{app:2d-em2c-config}

Algorithm \ref{alg:approxiterates} is run for $T=6$ iterations, $\varepsilon=0.8$, $\lambda=0.8$.
$K$ is set to a RW kernel applied for $n_K$ steps with Gaussian proposal $\mathcal{N}(0,\sigma_K^2)$. 
Before the projection step, we perform a local resample-move step (see Appendix \ref{app:local-move} for details) using the same RW kernel applied for $n_L$ steps with Gaussian proposal $\mathcal{N}(0,\sigma_L^2)$.

The projection step uses a Neural Spline Flow.
The flow has three coupling transforms, each parameterized by a two-hidden-layer neural network of width $64$, with, for each coupling layer, rational quadratic splines transformations using $8$ bins. 
It is trained by maximum likelihood for $8$ epochs on $N=10{,}000$ particles, with batch size $256$ and learning rate $10^{-3}$.
The architecture and hyperparameters of the flow are kept fixed across target distributions and experiments.

\subsubsection{Competing Methods Configurations}
\paragraph{RW:} 
Metropolis--Hastings sampler run as $N$ independent chains of length $n_{\mathrm{iter}}$, using Gaussian proposal $\mathcal{N}(0,\sigma_{\mathrm{RW}}^2 I_2)$ and retaining only the final state of each chain.
We do not include ULA for these experiments, as gradient-based proposals were empirically less effective than RW in this low-dimensional and multimodal setting.

\paragraph{NUTS:} hyperparameters of NUTS are tuned during a warmup phase of $n_{\mathrm{warmup}}$ iterations, with $0.8$ target acceptance probability. The maximum tree depth is set to $d_{\max}$. After warmup, $N$ samples are collected from the resulting chain. 

\paragraph{AIS:} the method uses a linear annealing schedule $\{\beta_\ell\}_{\ell=0}^L$.
At each intermediate temperature, particles are first reweighted and then propagated using $n_{\mathrm{iter}}$ steps of a RW kernel targeting the intermediate distribution $\pi_{\beta_\ell}$.
The proposal variance $\sigma_{\mathrm{RW}}^2$ is tuned independently from the MCMC baseline to ensure efficient exploration across intermediate distributions while maintaining a computational budget comparable to the other methods.

\paragraph{DiGS:} the method alternates between
(i) Gaussian perturbation steps and
(ii) local Langevin dynamics targeting a denoising posterior.
Given a current state $x$, it generates a noisy variable
\begin{equation*}
\tilde{x} = \alpha x + \varepsilon\sqrt{1-\alpha^2} ,
\qquad
\varepsilon \sim \mathcal{N}(0,I),
\end{equation*}
Conditionally on $\tilde{x}$, Langevin-based updates are then performed to sample from the associated denoising distribution.

We follow the reference implementation of \citet{chen2024diffusive}. 
We run $n_{\mathrm{chains}}$ parallel DiGS chains of 
$\lfloor N/n_{\mathrm{chains}}\rfloor$ iterations, recording the state of each chain after every iteration.
Each DiGS iteration provides a recorded sample and is controlled by:
(i) a Langevin step size $\gamma$,
(ii) a number of Gibbs sweeps $n_{\mathrm{sweeps}}$,
(iii) a number of Langevin steps per sweep $n_{\mathrm{LD}}$,
and
(iv) a variance-preserving annealing schedule
$(\alpha_{\min}, \alpha_{\max}, T_{\alpha})$.

\begin{table}[h!]
\centering
\caption{Sampling methods hyperparameters for the Dual Moons and Two Rings benchmarks.}
\label{tab:2d_params_combined}
\setlength{\tabcolsep}{4pt}
\renewcommand{\arraystretch}{1.15}
\begin{tabularx}{\columnwidth}{llrrX}
\toprule
Method & Parameters & Dual Moons & Two Rings & Comments \\
\midrule
EM2C
& $\sigma_K,\, n_K,\, \sigma_L,\, n_L$
& $1.0,\;10,\;0.1,\;5$
& $0.9,\;10,\;0.1,\;5$
& RW exploration and local move \\
RW
& $\sigma_{\mathrm{RW}},\, n_{\mathrm{iter}}$
& $1.0,\;1{,}000$
& $0.9,\;1{,}000$
& Shared RW kernel \\
NUTS
& $d_{\max},\, n_{\mathrm{warmup}}$
& $9,\;1{,}000$
& $9,\;1{,}000$
& Single chain, adaptive HMC parameters \\
AIS
& $L,\,\sigma_{\mathrm{RW}},\, n_{\mathrm{iter}}$
& $10,\;2.0,\;100$
& $10,\;2.0,\;100$
& Larger variance, fewer steps (budget-matched) \\
DiGS
& $n_{\mathrm{chains}},\, \gamma,\, n_{\mathrm{sweeps}},\, n_{\mathrm{LD}}$
& \hspace{2mm} $50,\;0.05,\;50,\;2$
& \hspace{2mm} $50,\;0.05,\;50,\;2$
& \multirow[t]{2}{=}{Budget-matched Langevin updates} \\
& $\alpha_{\min},\,\alpha_{\max},\,T_{\alpha}$
& $0.1,\, 0.9,\, 4$
& $0.1,\, 0.9,\, 4$
& \\
\bottomrule
\end{tabularx}
\end{table}

\newpage
\section{Additional Elements on the Bayesian Neural Network Benchmark}
\label{app:bnn}

This section reports the main hyperparameters used in the Bayesian neural network experiments of Section~\ref{sec:bnn}.

\paragraph{Methods and hyperparameters.}
All methods are initialized from the prior and generate $N=500$ samples for evaluation.
Hyperparameters are chosen to give comparable target-density and gradient evaluation budgets. The cost model is given in Appendix~\ref{sec:cost}.
MALA and DiGS use an evaluation budget of approximately $10^7$, while EM2C uses an evaluation budget of about $8\times 10^6$ to account for projection costs.

\subsection{EM2C configuration}

Algorithm \ref{alg:approxiterates} is run for $T=750$ iterations, $\varepsilon=0.8$, $\lambda=0.5$.
$K$ is set to a MALA kernel applied for $n_K = 20$ steps with step size $\gamma = 10^{-4}$.
Before the projection step, we perform a local resample-move step (see Appendix \ref{app:local-move} for details) using the same MALA kernel applied for $n_L = 2$ steps.

At each iteration, the projection fits a Gaussian mixture with diagonal-covariance
\begin{equation*}
\mu_\theta(x)
=
\sum_{k=1}^{20}w_k\,\mathcal N(x\mid m_k,\Sigma_k)
\end{equation*}
by EM, using covariance regularization $10^{-4}$ and a maximum of $100$ EM iterations.

\subsection{Competing Methods Configurations}

\paragraph{MALA:}
we run $n_{\mathrm{chains}} = 8$ parallel chains with step size $\gamma = 10^{-4}$
and record samples every $10{,}000$ MALA steps.

\paragraph{DiGS:} we use the same implementation as in Appendix~\ref{app:impl-2d} with $n_{\mathrm{chains}} = 8$ parallel chains, $T_\alpha=8$ noise levels linearly interpolating from $\alpha_T=0.1$ to $\alpha_1=0.9$, $n_{\mathrm{sweeps}}=200$, $n_{\mathrm{LD}}=10$ MALA steps per sweep, and step size $\gamma = 10^{-4}$.

\newpage
\section{Additional Implementation Details}
\label{app:impl-details}

This section summarizes the evaluation metrics, the optional local move used in some experiments, and the cost model used for budget matching.

\subsection{Sliced Wasserstein Distance}
\label{app:sw}

We use the sliced Wasserstein distance to compare proposal and target samples.
This metric is computationally tractable in high dimension, while retaining sensitivity to geometric discrepancies between distributions.

\begin{definition}
Let $\mu$ and $\nu$ be two probability measures on $\mathbb{R}^d$ with finite second moments.
For a unit vector $\theta \in \mathbb{S}^{d-1}$, denote by $\theta_\#\mu$ the pushforward of $\mu$ by $x\mapsto \langle x,\theta\rangle$.
The sliced Wasserstein-2 distance between $\mu$ and $\nu$ is defined as
\begin{equation}
\label{eq:sw_def}
\mathrm{SW}_2(\mu,\nu)
=
\left(
\int_{\mathbb{S}^{d-1}}
W_2^2(\theta_\#\mu,\theta_\#\nu)
\,\mathrm d\theta
\right)^{1/2}\eqsp,
\end{equation}
where $W_2$ is the one-dimensional Wasserstein-2 distance and the integral is with respect to the uniform distribution on $\mathbb{S}^{d-1}$.
\end{definition}

\paragraph{Empirical computation.}
Given samples $\{x_i\}_{i=1}^N\sim\mu$ and $\{y_i\}_{i=1}^N\sim\nu$, we approximate the integral using directions $\{\theta_k\}_{k=1}^K$ sampled uniformly in $\mathbb S^{d-1}$, 
and projections
$\langle x_i, \theta_k \rangle$ and $\langle y_i, \theta_k \rangle$.
Namely we use the Monte Carlo estimator
\begin{equation}
\label{eq:sw_mc}
\widehat{\mathrm{SW}}_2(\mu, \nu)
=
\left(
\frac{1}{K}
\sum_{k=1}^K
\frac{1}{N}
\sum_{i=1}^N
\left(
x_{(i)}^{(k)} - y_{(i)}^{(k)}
\right)^2
\right)^{1/2}\eqsp,
\end{equation}
where $x_{(i)}^{(k)}$ and $y_{(i)}^{(k)}$ denote the sorted projected samples along direction $\theta_k$.

\paragraph{Implementation details.}
For each benchmark, $\widehat{\mathrm{SW}}_2$ is computed using the $N$ samples produced by each method, a fixed reference set of $N$ target samples shared across methods, and $K=100$ random projection directions. 
For EM2C, the same target samples are also used across iterations.

\subsection{Energy Distance}
\label{app:ed}

For the two-dimensional benchmarks of Section \ref{sec:2d_benchmarks}, we also report the energy distance which complements $\mathrm{SW}_2$ by capturing discrepancies in both location and spread.

\begin{definition}
Let $X \sim \mu$ and $Y \sim \nu$ be random variables in $\mathbb{R}^d$, and let
$X'$ and $Y'$ be independent copies of $X$ and $Y$, respectively.
The energy distance between $\mu$ and $\nu$ is defined as
\begin{equation}
\label{eq:ed_def}
\mathrm{ED}(\mu, \nu)
=
2\,\mathbb{E}\!\left[\|X - Y\|\right]
-
\mathbb{E}\!\left[\|X - X'\|\right]
-
\mathbb{E}\!\left[\|Y - Y'\|\right]\eqsp.
\end{equation}
\end{definition}

\paragraph{Empirical computation.}
Given samples $\{x_i\}_{i=1}^N \sim \mu$ and $\{y_j\}_{j=1}^N \sim \nu$, we use the estimator
\begin{equation}
\label{eq:ed_mc}
\widehat{\mathrm{ED}}(\mu, \nu)
=
\frac{2}{N^2}\sum_{i=1}^N\sum_{j=1}^N\|x_i - y_j\|
-
\frac{1}{N^2}\sum_{i=1}^N\sum_{i'=1}^N\|x_i - x_{i'}\|
-
\frac{1}{N^2}\sum_{j=1}^N\sum_{j'=1}^N\|y_j - y_{j'}\|\eqsp.
\end{equation}
\paragraph{Implementation details.}
The energy distance $\widehat{\mathrm{ED}}$ is computed using $N = 10{,}000$ samples drawn from both the proposal distribution $\tilde{\mu}_t$ and target distribution $\pi$.

\subsection{Test Negative Log-Likelihood}
\label{app:bnn:nll}

For Bayesian neural network benchmark of Section~\ref{sec:bnn}, we evaluate posterior samples via negative log-likelihood (NLL) on a held-out test dataset.

\begin{definition}
Let $\mathcal D_{\mathrm{test}}=\{(x_i,y_i)\}_{i=1}^{N_{\mathrm{test}}}$ be a test dataset. 
For a distribution $\mu$ over network parameters, define
\begin{equation}
\label{eq:bnn_nll}
\mathrm{NLL}
=
\mathbb E_{\theta\sim\mu}
\left[
-\frac{1}{N_{\mathrm{test}}}
\sum_{i=1}^{N_{\mathrm{test}}}
\log {p(y_i\mid x_i,\theta)}
\right]\eqsp.
\end{equation}
\end{definition}
In our setting, $p(y\mid x,\theta)=\mathcal N(f_\theta(x),\sigma_n^2)$. Thus each term in \eqref{eq:bnn_nll} corresponds to the usual squared-error loss with Gaussian noise.

\paragraph{Empirical computation.}
Given samples $\{\theta_j\}_{j=1}^N$ from the posterior approximation produced by each method, we estimate
\begin{equation}
\widehat{\mathrm{NLL}}
=
\frac{1}{N}
\sum_{j=1}^N
\left[
-\frac{1}{N_{\mathrm{test}}}
\sum_{i=1}^{N_{\mathrm{test}}}
\log {p(y_i\mid x_i,\theta_{j})}
\right]\eqsp.
\end{equation}
For instance, the posterior approximation from EM2C is the learned proposal $\tilde{\mu}_T$.

\paragraph{Implementation details.}
In all experiments, we use the $N = 500$ posterior samples for evaluation and a test dataset of size $N_{\mathrm{test}} = 500$.

Since EM2C updates a population of particles, its NLL can fluctuate more across iterations than for chain-based samplers. To reduce this variability, we report the average NLL over the last $20$ iterations of each run.

\subsection{Optional Local Move Kernel}
\label{app:local-move}

In Algorithm~\ref{alg:approxiterates}, the particles $Z_t^{1:N}$ are resampled from the empirical mixture $\Fmc{t}$, which may result in reduced
diversity among the resampled particles. 
In some experiments, after this resampling step, we apply a short local Markov move, that is, for a Markov kernel $L$, independently for each particle, we compute
\begin{equation*}
\widetilde Z_t^i\sim L(Z_t^i,\cdot),
\qquad i=1,\ldots,N,
\end{equation*}
and use $\widetilde Z_t^{1:N}$ in the projection step. 
This optional move mildly disperses particles and improves numerical stability. 
It is omitted from Algorithm~\ref{alg:approxiterates} in the main text for clarity, as it is a standard diversity-enhancing refinement and does not change the logic of the Monte Carlo implementation.

\subsection{Computational Budget Rules}
\label{sec:cost}

We compare methods using approximate evaluation budgets, counted in units of target-density or gradient evaluations. 
For gradient-based methods, one gradient evaluation is counted as one evaluation unit; for Metropolis-corrected methods such as MALA and NUTS, we include the additional density evaluations required by the acceptance step. 
For NUTS, we use $2^{d_{\max}}$ as an upper bound on the number of leapfrog steps per iteration, since tree expansion may stop earlier due to the U-turn criterion. The evaluation budget for the different methods used throughout the paper is reported in Table \ref{tab:cost-summary}

Proposal-density evaluations, sampling, resampling, and projection costs are not included in the budget. 
To compensate for the additional cost of projection, especially when training normalizing flows, EM2C is run with a slightly reduced evaluation budget.
Moreover, when $\lambda = 1$, the exploration kernel is not used within Algorithm \ref{alg:approxiterates}. 
To provide comparison under a matched budget, we reallocate the corresponding kernel budget to $n_K$ additional EM2C iterations.

These budgets are used only to choose comparable hyperparameters, not to provide wall-clock comparisons.

\begin{table}[h!]

\centering
\caption{
Approximate number of target or gradient evaluations used for budget matching.
}
\label{tab:cost-summary}
\setlength{\tabcolsep}{5pt}
\renewcommand{\arraystretch}{1.08}
\begin{tabular*}{0.65\linewidth}{@{\extracolsep{\fill}}ll}
\toprule
Method & Evaluation budget \\
\midrule
EM2C with RW or ULA exploration kernel & $N\,T\,(n_K + n_L + 2)$ \\
EM2C with MALA exploration kernel & $N\,T\,(2n_K + 2n_L + 2)$ \\
RW & $\,n_{\mathrm{iter}}\,n_{\mathrm{chains}}$ \\

ULA & $\,n_{\mathrm{iter}}\,n_{\mathrm{chains}}$ \\

MALA & $2\,n_{\mathrm{iter}}\,n_{\mathrm{chains}}$ \\
NUTS & $\lesssim 2^{d_{\max}+1}\,(n_{\mathrm{warmup}}+N)$ \\
DiGS & $2N\,T_\alpha\,n_{\mathrm{sweeps}}\,(n_{\mathrm{LD}}+1)$ \\
AIS & $N\,L\,(n_{\mathrm{iter}}+1)$ \\
\bottomrule
\end{tabular*}
\end{table}

\end{document}